\newcommand{\FF}{\mathbb{F}}
\newcommand{\dd}{\mathrm{d}}
\newcommand{\len}{\mathrm{len}}
\title{On the Griesmer bound for nonlinear codes}
\author{Emanuele Bellini\inst{1} \and Eleonora Guerrini\inst{2} \and Alessio Meneghetti\inst{3}  \and Massimiliano Sala\inst{3}}
\institute{
Telsy S.p.A., Italy\\
\email{eemanuele.bellini@gmail.com}
\and
LIRMM, Universit\'e
de Montpellier 2, France\\
\email{guerrini@lirmm.fr}
\and
University of Trento, Italy\\
\email{almenegh@gmail.com}
\and
University of Trento, Italy\\
\email{maxsalacodes@gmail.com}
} 
\begin{document}
\maketitle
\begin{abstract}
Most bounds on the size of codes hold for any code, whether linear or nonlinear. 
Notably, the Griesmer bound, holds only in the linear case.
In this paper we characterize a family of systematic nonlinear codes for which the Griesmer bound holds. 
Moreover, we show that the Griesmer bound does not necessarily hold for a systematic code
 by showing explicit counterexamples.
On the other hand, we are also able to provide (weaker) versions of the Griesmer bound holding for all systematic codes.
\end{abstract}


\section{Introduction}
We consider codes over a finite field $\FF_q$ of length $n$, with $M$ codewords, and distance $d$. 
A code $C$ with such parameters is denoted as an $(n,M,d)_q$-code.
\begin{definition}\label{defn: systematic}
An $(n,q^k,d)_q$-systematic code $C$ is the image of a map $F:\left(\FF_q\right)^k\to \left(\FF_q\right)^n$, $n\ge k$, s.t.
a vector $x = (x_1,\ldots,x_k) \in (\FF_q)^k$ is mapped to a vector $$(x_1,\ldots,x_k,f_{k+1}(x),\ldots,f_n(x)) \in (\FF_q)^n,$$
where 
$f_i,i=k+1,\ldots,n$ are maps from $(\FF_q)^k$ to $\FF_q$. 
We refer to $k$ as the dimension of $C$. 
The coordinates from 1 to $k$ are called systematic, while those from $k+1$ to $n$ are called non-systematic.
\end{definition}
If the maps $f_i$ are all linear, then the systematic code $C$ is a subspace of dimension $k$ of $(\FF_q)^n$ and we say it is a  $[n,k,d]_q$-linear code. 
A nonlinear code is a code which is not necessarily linear or systematic.\\
We denote with $\len(C),\dim(C),\dd(C)$, respectively, the length, the dimension (when defined) and the minimum distance of a code $C$.\\
A central problem of coding theory is to determine the minimum value of $n$, for which an $(n,M,d)_q$-code or an $[n,k,d]_q$-linear code exists. 
We denote by $N_q(M,d)$ the minimum length of a nonlinear code over $\FF_q$, with $M$ codewords and distance $d$. 
We denote by $S_q(k,d)$ the same value in the case of a systematic code of dimension $k$, while we use $L_q(k,d)$ in the case of a linear code of dimension $k$. Observe that 
$$
N_q\left(q^k,d\right)\leq S_q(k,d)\leq L_q(k,d).
$$
A well-known lower bound for $L_q(k,d)$ is
\begin{theorem}[Griesmer bound]\label{thm:griesmer_linear}
All $[n,k,d]_q$ linear codes satisfy the following bound:
\begin{equation}\label{eq:griesmer_linear}
n \ge L_q(k,d)\ge  g_q(k,d) := \sum_{i=0}^{k-1}\left\lceil\frac{d}{q^i}\right\rceil
\end{equation}
\end{theorem}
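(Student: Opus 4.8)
The plan is to argue by induction on the dimension $k$, using the classical \emph{residual code} construction. The base case $k=1$ is immediate: any $[n,1,d]_q$ code contains a nonzero codeword of weight $d$, whence $n\ge d=\lceil d/q^0\rceil=g_q(1,d)$.

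For the inductive step, suppose the bound holds for all linear codes of dimension $k-1$, and let $C$ be an $[n,k,d]_q$ code. First I would fix a codeword $c$ of minimal weight $d$; after permuting coordinates and rescaling (operations that preserve $n$, $k$, and $d$), I may assume $c=(1,\ldots,1,0,\ldots,0)$ with support equal to the first $d$ positions. I then define the residual code $\mathrm{Res}(C,c)$ to be the restriction of $C$ to the last $n-d$ coordinates. The goal is to show that $\mathrm{Res}(C,c)$ is an $[n-d,\,k-1,\,d']_q$ code with $d'\ge\lceil d/q\rceil$, and then to close the induction.

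The heart of the argument, and the step I expect to be the main obstacle, is the residual distance bound $d'\ge\lceil d/q\rceil$, which I would obtain by a pigeonhole/averaging argument. Take any codeword $v\in C$ that is not a scalar multiple of $c$; among its first $d$ entries some field element $\alpha\in\Fq$ occurs at least $\lceil d/q\rceil$ times. Then $v-\alpha c$ vanishes in at least $\lceil d/q\rceil$ of the first $d$ coordinates, so its weight there is at most $d-\lceil d/q\rceil$; since $v-\alpha c$ is a nonzero codeword, it has weight $\ge d$, and hence its weight on the last $n-d$ coordinates is at least $\lceil d/q\rceil$. Because $c$ is supported on the first $d$ positions, $v$ and $v-\alpha c$ have identical restrictions to the last $n-d$ coordinates, so the residual of $v$ has weight $\ge\lceil d/q\rceil$. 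The same computation shows this residual is nonzero whenever $v\notin\langle c\rangle$, so the restriction map has kernel exactly $\langle c\rangle$; consequently $\mathrm{Res}(C,c)$ has dimension precisely $k-1$, as required to invoke the inductive hypothesis.

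Finally I would assemble the pieces. Applying the inductive hypothesis to $\mathrm{Res}(C,c)$ gives
\[
n-d\;\ge\;g_q\bigl(k-1,d'\bigr)\;=\;\sum_{i=0}^{k-2}\Bigl\lceil \tfrac{d'}{q^i}\Bigr\rceil\;\ge\;\sum_{i=0}^{k-2}\Bigl\lceil \tfrac{\lceil d/q\rceil}{q^i}\Bigr\rceil .
\]
The only remaining ingredient is the elementary ceiling identity $\bigl\lceil \lceil d/q\rceil / q^i\bigr\rceil=\lceil d/q^{i+1}\rceil$, valid for positive integers, which rewrites the last sum as $\sum_{j=1}^{k-1}\lceil d/q^j\rceil=g_q(k,d)-d$. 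Adding $d$ to both sides yields $n\ge g_q(k,d)$, completing the induction.
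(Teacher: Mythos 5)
Your proof is correct: it is the classical residual-code argument (pigeonhole over the $q$ field values on the support of a minimum-weight codeword $c$ to get $d'\ge\lceil d/q\rceil$, the observation that the restriction map has kernel exactly $\langle c\rangle$ so the residual has dimension $k-1$, and the nested-ceiling identity $\bigl\lceil\lceil d/q\rceil/q^i\bigr\rceil=\lceil d/q^{i+1}\rceil$ to close the induction), and every step checks out. The paper does not actually prove this theorem --- it only cites Griesmer, Solomon--Stiffler, and Huffman--Pless --- but the argument it refers to (and partially imitates in its Proposition~\ref{prop: singleton improved} for the systematic Singleton improvement) is precisely the one you give, so this is the same approach as the standard proof the paper relies on.
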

The Griesmer bound, which can be seen as an extension of the Singleton bound ($n \ge d+k-1$) \cite{CGC-cd-book-huffmanPless03} (Section 2.4) in the linear case, 
has been introduced by Griesmer \cite{CGC-cd-art-griesm60} in the case of binary linear codes and then 
generalized by Solomon and Stiffler \cite{CGC-cod-art-solstiffl65} in the case of $q$-ary linear codes. \\
It is known that the Griesmer bound is not sharp \cite{CGC-cod-art-maruta1996non}, \cite{CGC-cod-art-van1980uniqueness}, \cite{CGC-cod-art-Maruta1997}. \\
Important examples of linear codes meeting the Griesmer bound are the simplex code \cite{CGC-cd-book-huffmanPless03} (Section 1.8) and the $[5,6,11]_3$ Golay code \cite{CGC-cd-book-huffmanPless03} (Section 1.9), \cite{CGC-gola49}. \\
Many authors such as
\cite{CGC-cod-art-helleseth1981characterization}, 
\cite{CGC-cod-art-hamada1993characterization}, 
\cite{CGC-cod-art-tamari1984linear}, 
\cite{CGC-cod-art-Maruta1997}, and 
\cite{CGC-cod-art-Klein04}, 
have characterized classes of linear codes meeting the Griesmer bound.
In particular, finite projective geometries play an important role in the study of these codes. 
For example in 
\cite{CGC-cod-art-hellesth1992projective}, 
\cite{CGC-cod-art-Hamada93} and 
\cite{CGC-cod-art-Tamari93} 
minihypers and maxhypers are used to characterize linear codes meeting the Griesmer bound. 
Research has been done also to characterize the codewords of linear codes attaining the Griesmer bound \cite{CGC-cod-art-ward1998divisibility}.\\
Many known bounds on the size of nonlinear codes, for example 
the Johnson bound (\cite{CGC-cd-art-john62},\cite{CGC-cd-art-john71},\cite{CGC-cd-book-huffmanPless03}), 
the Elias-Bassalygo bound (\cite{CGC-cd-art-bass65},\cite{CGC-cd-book-huffmanPless03}), 
the Levenshtein bound (\cite{CGC-cd-art-lev98}), 
the Hamming (Sphere Packing) bound, 
the Singleton bound (\cite{CGC-cd-book-pless98}),  
the Plotkin bound (\cite{CGC-cd-art-plotk60}, \cite{CGC-cd-book-huffmanPless03}),
the Zinoviev-Litsyn-Laihonen (\cite{CGC-cod-art-zinlyts1984shortening}, \cite{CGC-cod-art-litlai98}),
the Bellini-Guerrini-Sala (\cite{CGC-cd-art-BellGuerrSal2014}), and
the Linear Programming bound (\cite{CGC-cd-art-Dels73}),
are true for both linear and nonlinear codes. \\
\indent
The proof of the Griesmer bound heavily relies on the linearity of the code and it cannot be applied to all nonlinear codes. 
%
%
In Section \ref{sec:systematic_nonlinear_codes} we prove that,
once $q$ and $d$ have been chosen, if all nonlinear $(n,q^{k},d)_q$-systematic codes with $k<1+\log_qd$ respect the Griesmer bound, then the Griesmer bound holds for all systematic codes with the same $q$ and $d$. In particular for any $q$ and $d$ only a finite set of $(k,n)$-pairs has to be analysed in order to prove the bound for all $k$ and $n$.\\
Using this result, in Section \ref{sec:systematic_griesmer} we characterize several families of systematic codes for which the Griesmer bound holds. 
In Section \ref{sec:weak_griesmer} we provide (weak) versions of the Griesmer bound, holding for any systematic code. Finally, in Section \ref{sec:counterexample}, we show explicit counterexamples of nonlinear codes and systematic codes for which the Griesmer bound does not hold.
%
%
%


\section{A sufficient condition to prove the Griesmer bound for systematic codes}\label{sec:systematic_nonlinear_codes}
The following proposition is well-known, we however provide a sketch of the proof for the particular case in which we will make use of it.
\begin{proposition}
Let $C$ be an $(n,q^k,d)$-systematic code, and $C'$ be the code obtained by shortening $C$ in a systematic coordinate. Then $C'$ is an $(n-1,q^{k-1},d')$-systematic code with $d'\ge d$.
\end{proposition}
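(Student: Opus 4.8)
The plan is to work directly with the systematic description of $C$ from Definition~\ref{defn: systematic} and to exhibit $C'$ in the same systematic form. Without loss of generality I would shorten in the first systematic coordinate, fixing $x_1 = 0$; by relabelling, any other systematic coordinate is handled identically. Concretely, $C$ is the image of $F(x) = (x_1, \ldots, x_k, f_{k+1}(x), \ldots, f_n(x))$, and shortening collects those codewords with $x_1 = 0$ and then deletes that (now constant) coordinate. This yields the set of vectors $(x_2, \ldots, x_k, f_{k+1}(0, x_2, \ldots, x_k), \ldots, f_n(0, x_2, \ldots, x_k))$ as $(x_2, \ldots, x_k)$ ranges over $(\FF_q)^{k-1}$.

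First I would verify the parameters. The map $(x_2, \ldots, x_k) \mapsto (x_2, \ldots, x_k, g_{k+1}, \ldots, g_n)$, where $g_i(x_2, \ldots, x_k) := f_i(0, x_2, \ldots, x_k)$, is exactly of the systematic shape required by Definition~\ref{defn: systematic}, with $k-1$ systematic coordinates and length $n-1$. Since the first $k-1$ output coordinates reproduce the input, the map is injective, so $C'$ has precisely $q^{k-1}$ distinct codewords; hence $C'$ is an $(n-1, q^{k-1}, d')$-systematic code for some $d'$.

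The distance inequality is the heart of the statement, and I would obtain it by a direct comparison. Take two distinct codewords of $C'$, arising from inputs $(x_2, \ldots, x_k) \neq (x_2', \ldots, x_k')$, and consider their two \emph{full} preimages in $C$, namely the images under $F$ of $(0, x_2, \ldots, x_k)$ and $(0, x_2', \ldots, x_k')$. These two codewords of $C$ are distinct and agree in the first coordinate (both equal $0$), so deleting that coordinate discards no disagreement: the Hamming distance between the two shortened words in $C'$ equals the Hamming distance between the two original words in $C$. The latter is at least $d$ because $C$ has minimum distance $d$, so every pair of codewords of $C'$ is at distance $\geq d$, giving $d' \geq d$.

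Finally, a remark on where care is needed rather than genuine difficulty. The only point to watch is that shortening a non-systematic coordinate could collapse codewords or leave the residual code without an obvious systematic form; the hypothesis that we shorten a \emph{systematic} coordinate is exactly what guarantees both the count of $q^{k-1}$ codewords and the ``no disagreement deleted'' property that pins the distance. There is no real obstacle here---the argument is elementary---so the main task is clean bookkeeping together with making the relabelling-to-coordinate-$1$ reduction explicit.
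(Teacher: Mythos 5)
Your proof is correct and follows essentially the same route as the paper's: restrict to the subcode with the chosen systematic coordinate equal to $0$, puncture that now-constant coordinate, and observe that the distance is preserved since no disagreement is deleted. Your write-up is somewhat more explicit about injectivity and the codeword count, but there is no substantive difference.
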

\begin{proof}
To obtain $C'$, consider the code $C'' = \left\{F(x) \mid x = (0,x_2,\ldots,x_k) \in \left(\FF_q\right)^k \right\}$, i.e. the subcode of $C$ which is the image of the set of messages whose first coordinate is equal to $0$. 
Then $C''$ is such that $\dim(C'') = k-1$ and $\dd(C'') \ge d$. 
Since, by construction, all codewords have the first coordinate equal to zero, we obtain the code $C'$ by puncturing $C''$ on the first coordinate, so that $\len(C')=n-1$ and $d' = \dd(C') = \dd(C'') \ge d$.
\end{proof}
The following lemma is well-known, but we provide a proof because it anticipates our later argument.
\begin{lemma}\label{lem: reducing d}
If $n > k$, then given an $(n,q^k,d)$-systematic code $C$, there exists an $(n,q^k,\bar{d})$-systematic code $\bar{C}$ for any $1\leq\bar{d}\leq d$. 
\end{lemma}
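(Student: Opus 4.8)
The plan is to build the required codes not by a single clever modification of $C$, but by switching off the non-systematic coordinate maps one at a time, producing a whole chain of systematic codes of the same length $n$ and dimension $k$ whose minimum distances increase from $1$ up to $d$ in unit steps. Concretely, for each $j$ with $k \le j \le n$ let $C_j$ be the systematic code of length $n$ obtained from $C$ by keeping the maps $f_{k+1},\ldots,f_j$ and replacing $f_{j+1},\ldots,f_n$ by the constant map $0$; that is, $C_j$ is the image of $x \mapsto (x_1,\ldots,x_k,f_{k+1}(x),\ldots,f_j(x),0,\ldots,0)$. Each $C_j$ is genuinely an $(n,q^k,\dd(C_j))$-systematic code, since the first $k$ coordinates still carry the identity encoding, so the map stays injective and the systematic form is preserved. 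Note that $C_n = C$, so $\dd(C_n) = d$.

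For the starting point, $C_k$ has all non-systematic coordinates equal to zero, so two distinct messages $a \ne b$ yield codewords that differ exactly in the systematic positions where $a$ and $b$ differ; hence $\dd(C_k)$ equals the minimum Hamming distance between distinct vectors of $(\FF_q)^k$, which is $1$. The key step is then to control the jump from $C_j$ to $C_{j+1}$. Passing from $C_j$ to $C_{j+1}$ only turns coordinate $j+1$ from the constant $0$ into $f_{j+1}$, so for every pair $a \ne b$ the pairwise distance either stays the same or grows by exactly $1$, according to whether $f_{j+1}(a) = f_{j+1}(b)$ or not. Taking minima over all pairs, and evaluating at a pair realizing $\dd(C_j)$, yields $\dd(C_j) \le \dd(C_{j+1}) \le \dd(C_j) + 1$.

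Thus $\dd(C_k), \dd(C_{k+1}), \ldots, \dd(C_n)$ is a non-decreasing integer sequence starting at $1$, ending at $d$, and advancing by at most $1$ at each step; by the discrete intermediate value property it attains every integer value in $\{1,\ldots,d\}$. For a prescribed $\bar d$ with $1 \le \bar d \le d$ we may therefore pick an index $j$ with $\dd(C_j) = \bar d$ and set $\bar C = C_j$, which is the desired $(n,q^{k},\bar d)$-systematic code. The main point requiring care is exactly this unit-step control together with the two endpoints; the apparent difficulty in the boundary case $n = k+1$, where the chain has a single step from distance $1$ to distance $d$, evaporates because a single non-systematic coordinate forces $d \le 2$ by the Singleton bound, so the step size is automatically at most $1$ and the chain is consistent. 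A tempting alternative, namely forcing one minimum-distance pair closer by editing a single value of one $f_i$, is more fragile: when a minimum-weight pair differs only in systematic positions its distance is locked by the identity encoding and cannot be lowered, which is why the global switch-off chain is the cleaner route.
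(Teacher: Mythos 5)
Your proof is correct and follows essentially the same strategy as the paper's: remove the influence of the non-systematic coordinates one at a time, observe that each step changes the minimum distance by at most one, and invoke the discrete intermediate value property to hit every value $\bar d$ between $1$ and $d$. The one substantive difference is that the paper \emph{punctures} the non-systematic coordinates, which produces codes of length $n-i$ rather than $n$ (so its proof, read literally, needs a final padding step to match the statement), whereas your device of replacing $f_{j+1},\ldots,f_n$ by the zero map keeps every code in the chain at length $n$ and dimension $k$, so your argument delivers exactly the $(n,q^k,\bar d)$-systematic code the lemma asserts; the unit-step estimate $\dd(C_j)\le \dd(C_{j+1})\le \dd(C_j)+1$ and the endpoints $\dd(C_k)=1$, $\dd(C_n)=d$ are all justified correctly.
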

\begin{proof}
If $n > k$, we can consider the code $C^1$ obtained by puncturing $C$ in a non-systematic coordinate.  
$C^1$ is an $(n-1,q^k,d^{(1)})$-systematic code. Of course, either $d^{(1)}=d$ or $d^{(1)}=d-1$.\\
By puncturing at most $n-k$ coordinates, we will find a code whose distance is $1$. Then there must exists an $i\leq n-k$ such that the code $C^i$, obtained by punturing $C$ in the last $i$ coordinates, has distance equal to $\bar{d}$.
%
\end{proof}
%
\begin{theorem}\label{thm: minimum k q}
For fixed $q$ and $d$, if
\begin{equation} \label{eq:griesmer_inequality}
 S_q(k,d) \ge g_q(k,d)
\end{equation}
for all $k$ such that $1 \le k < 1+\log_q d$, 
then \eqref{eq:griesmer_inequality} holds for any positive $k$,
i.e. the Griesmer bound is true for all the systematic codes over $\FF_q$ with minimum distance $d$.
\end{theorem}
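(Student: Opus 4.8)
The plan is to argue by induction on $k$, using the shortening operation to reduce the dimension by one while exploiting the special shape of the Griesmer sum in the regime $k \ge 1 + \log_q d$. The base cases $1 \le k < 1 + \log_q d$ are exactly the hypothesis we are permitted to assume, so all the real work lies in the inductive step, which pushes the bound from the finitely many ``small'' dimensions up to every $k$.

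First I would record the arithmetic fact that drives the induction: the condition $k \ge 1 + \log_q d$ is equivalent to $q^{k-1} \ge d$, and since $d \ge 1$ this forces $\lceil d/q^{k-1}\rceil = 1$. Hence in this regime the Griesmer function satisfies the one-step recursion
\[
g_q(k,d) = g_q(k-1,d) + \left\lceil \frac{d}{q^{k-1}} \right\rceil = g_q(k-1,d) + 1 .
\]
In words, each systematic coordinate beyond the threshold costs exactly one in the Griesmer bound, which is precisely the length lost under a single shortening; this matching is the crux of the argument.

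For the inductive step, fix $k \ge 1 + \log_q d$ and assume $S_q(k-1,d) \ge g_q(k-1,d)$. Choose a systematic $(n,q^k,d)$ code $C$ realising $n = S_q(k,d)$ and shorten it in a systematic coordinate. By the Proposition on shortening this yields a systematic $(n-1,q^{k-1},d')$ code with $d' \ge d$; invoking Lemma \ref{lem: reducing d} (equivalently, the monotonicity of $S_q(k-1,\cdot)$ in the distance) I then bring the distance back down to $d$ without increasing the length, so that $n-1 \ge S_q(k-1,d)$. Chaining this with the induction hypothesis and the recursion above gives
\[
S_q(k,d) = n \ge S_q(k-1,d) + 1 \ge g_q(k-1,d) + 1 = g_q(k,d),
\]
which closes the induction. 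Note that the step at the smallest super-threshold dimension appeals only to a base case, so there is no gap between the base range and the inductive range.

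The step I expect to require the most care is the distance bookkeeping. Shortening can strictly increase the minimum distance, whereas the hypothesis only controls codes of distance exactly $d$; this is why the reduction of $d'$ back to $d$ via Lemma \ref{lem: reducing d} is essential rather than cosmetic, and one must check that the puncturing there applies (it needs $n > k$, which holds for $d \ge 2$ by the Singleton bound). The degenerate case $d = 1$ must be dispatched separately: there the range $1 \le k < 1$ is empty, so no base cases are supplied, but $S_q(k,1) = k = g_q(k,1)$ holds trivially via the identity encoding, so the statement is immediate.
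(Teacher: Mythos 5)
Your proof is correct and is essentially the paper's argument recast as a direct induction: the paper takes a minimal $k'$ violating the bound and derives a contradiction, while you run the same shortening-plus-Lemma~\ref{lem: reducing d} step forward, in both cases hinging on the identity $\left\lceil d/q^{k-1}\right\rceil = 1$ when $q^{k-1}\ge d$. Your explicit handling of the $d=1$ degenerate case and the $n>k$ check for the puncturing are small points of care the paper leaves implicit, but the route is the same.
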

\begin{proof}
It is sufficient to show that 
 if an $(n,q^k,d)_q$-systematic code not satisfying the Griesmer bound exists, 
 then an $(n',q^{k'},d)_q$-systematic code not satisfying the Griesmer bound exists with $k'<1+\log_q d$, and $n'>k'$. \\
 For each fixed $d,q$ suppose there exists an $(n,q^k,d)_q$-systematic code not satisfying the Griesmer bound, i.e., there exists $k$ such that $S_q(k,d) < g_q(k,d)$. \\
 Let us call 
 $\Lambda_{q,d} = \{k \ge 1 \mid S_q(k,d) < g_q(k,d) \}$.\\
 If $\Lambda_{q,d}$ is empty than the Griesmer bound is true for such parameters $q,d$.\\
 Otherwise there exists a minimum $k'\in\Lambda_{q,d}$ such that $S_q(k',d) < g_q(k',d)$.\\
 In this case we can consider an $(n,q^{k'},d)_q$ systematic code $C$ not verifying the Griesmer bound,
 namely $n = S_q(k',d)$.
 Due to Definition \ref{defn: systematic}, $C$ can be seen as the image of a map
 $$F(x)=(x_1,\ldots,x_{k'},f_{k'+1}(x),\ldots,f_n(x)),$$ where $x=(x_1,\ldots,x_{k'})$.
 We define a code $C'$ as the image of $$ F(x')=(x_2,\ldots,x_{k'},f_{k'+1}(0,x_2,\ldots,x_{k'}),\ldots,f_n(0,x_2,\ldots,x_{k'}))$$ where $x'=(x_2,\ldots,x_{k'})$.
 Clearly, $C'$ is an $(n-1,q^{k'-1},d')$ systematic code and $d'\ge d$. 
 Applying Lemma \ref{lem: reducing d} to $C'$, we can obtain an $(n-1,q^{k'-1},d)_q$ systematic code $\bar{C}$. 
 Since $k'$ was the minimum among all the values in $\Lambda_{q,d}$, then the Griesmer bound holds for $\bar{C}$, and so 
 \begin{equation}\label{eq: griesmer n-1}
 n-1 \ge g_q(k'-1,d) = \sum_{i=0}^{k'-2}\left\lceil\frac{d}{q^i}\right\rceil.
 \end{equation}
 We observe that, if $q^{k'-1}\ge d$, then $\left\lceil\frac{d}{q^{k'-1}}\right\rceil=1$, so we can rewrite \eqref{eq: griesmer n-1} as
 \begin{align*}
  n \ge \sum_{i=0}^{k'-2}\left\lceil\frac{d}{q^i}\right\rceil +1\ge \sum_{i=0}^{k'-2}\left\lceil\frac{d}{q^i}\right\rceil + \left\lceil\frac{d}{q^{k'-1}}\right\rceil 
      = \sum_{i=0}^{k'-1}\left\lceil\frac{d}{q^i}\right\rceil
      = g_q(k',d)
 \end{align*}
 Since we supposed $n < g_q(k',d)$, we have reached a contradiction with the assumption $q^{k'-1}\ge d$. Hence for such $d$, the minimum $k$ in $\Lambda_{q,d}$ has to satisfy $q^{k-1}<d$, which is equivalent to our claimed expression $k<1+\log_q d$.
\end{proof}
\section{Set of parameters for which the Griesmer bound holds in the nonlinear case}
\label{sec:systematic_griesmer}
In this section we characterize several sets of parameters $(q,d)$ for which the Griesmer bound holds for systematic codes.
\subsection{The case $d \le 2q$}
We use Proposition \ref{thm: minimum k q} to prove that all $q$-ary systematic codes with distance up to $2q$ satisfy the Griesmer bound.
\begin{theorem}\label{thm: griesmer 2q}
If $d\le 2q$ then $S_q(k,d)\ge g_q(k,d)$.
\end{theorem}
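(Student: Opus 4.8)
The plan is to invoke Theorem~\ref{thm: minimum k q}, which reduces the claim to verifying the Griesmer inequality $S_q(k,d) \ge g_q(k,d)$ only for those $k$ with $1 \le k < 1+\log_q d$. The first step is therefore to observe that the hypothesis $d \le 2q$ forces this range to be very small: since $\log_q d \le \log_q(2q) = 1 + \log_q 2 \le 2$ for every $q \ge 2$, we get $k < 3$, so it suffices to treat the two cases $k=1$ and $k=2$.

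For $k=1$ we have $g_q(1,d) = \lceil d \rceil = d$, and the Singleton bound $S_q(k,d) \ge N_q(q^k,d) \ge d+k-1$ immediately yields $S_q(1,d) \ge d = g_q(1,d)$. The same Singleton bound disposes of the subcase $k=2$ with $d \le q$: there $\lceil d/q \rceil = 1$, hence $g_q(2,d) = d+1$, while Singleton gives $S_q(2,d) \ge d+1$.

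The remaining, and genuinely harder, case is $k=2$ with $q < d \le 2q$, where $\lceil d/q \rceil = 2$ so that $g_q(2,d) = d+2$, whereas Singleton only provides $S_q(2,d) \ge d+1$. Here I would argue by contradiction: assume a $(d+1,q^2,d)$-systematic code exists. Since $d > q$ one checks that $d > \left(1-\tfrac1q\right)(d+1)$, so the $q$-ary Plotkin bound applies and gives $q^2 \le \dfrac{d}{d-\left(1-\frac1q\right)(d+1)} = \dfrac{qd}{d-q+1}$. Clearing the positive denominator, this is equivalent to $(q-1)(d-q) \le 0$, which is false for $q \ge 2$ and $d > q$. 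The contradiction rules out length $d+1$, whence $S_q(2,d) \ge d+2 = g_q(2,d)$, completing all cases via the reduction of Theorem~\ref{thm: minimum k q}.

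The main obstacle is precisely this last step: the Singleton bound is too weak to separate $d+1$ from $d+2$, and the argument hinges on the Plotkin bound being strict enough in the regime $d > q$ (equivalently, with the relative distance $d/n$ close to $1-\tfrac1q$) to exclude $q^2$ codewords at length $d+1$. All other cases follow immediately from Singleton, and it is worth noting that the constraint $d \le 2q$ is used only to collapse the set of $k$ to be checked down to $\{1,2\}$; the Plotkin estimate itself does not require the upper bound $d \le 2q$.
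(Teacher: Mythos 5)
Your proof is correct, and it reaches the conclusion by a genuinely different route in the one case that matters. The reduction to $k\in\{1,2\}$ via Theorem~\ref{thm: minimum k q} and the disposal of $k=1$ (and of $k=2$ with $d\le q$, which is in fact already excluded by the reduction since $\log_q d\le 1$ there) by Singleton coincide with the paper. For the critical case $k=2$, $q<d\le 2q$, the paper also pins the hypothetical counterexample down to a $(d+1,q^2,d)_q$ systematic MDS code, but then argues combinatorially from the systematic structure: each coordinate function $f_i(\alpha,\cdot)$ must be a permutation of $\FF_q$, the zero sets of the $d-1$ functions $f_3,\dots,f_{d+1}$ at a fixed $\alpha$ must be pairwise disjoint singletons, and counting forces $q\ge d$, a contradiction. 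You instead feed the parameters $(d+1,q^2,d)$ directly into the Plotkin bound (Theorem~\ref{thm: plotkin}); the hypothesis $d>\bigl(1-\tfrac1q\bigr)(d+1)$ holds precisely when $d>q-1$, and the resulting inequality reduces to $(q-1)(d-q)\le 0$, which fails for $d>q$. Your computation checks out. What your approach buys: it is shorter, it never uses the systematic structure in this step, so it actually proves the stronger statement $N_q(q^2,d)\ge d+2$ for every $d>q$ (no upper bound $d\le 2q$ needed, as you note), and it is consonant with the paper's own later use of Plotkin in Proposition~\ref{prop: griesmer 1}. What the paper's argument buys is independence from the Plotkin machinery and an explicit structural picture of why a systematic $(d+1,q^2,d)_q$ MDS code cannot exist when $d>q$.
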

\begin{proof}
First, consider the case $d \leq q$. 
By Theorem \ref{thm: minimum k q} it is sufficient to show that, fixing $q,d$, for any $n$ an $(n,q^k,d)_q$-systematic code with $1 \le k<1+\log_q d$ and $n < g_q(k,d)$ does not exists. 
If $1 \le k<1+\log_q d$ then $\log_q d \le \log_q q = 1$, and so $k$ may only be 1. 
Since $g_q(1,d) = d$ and $n \ge d$, we clearly have that $n \ge g_q(1,d)$.\\
%
%
Now consider the case $q<d\leq 2q$. 
We use again Theorem \ref{thm: minimum k q},
i.e. we show that, fixing $q,d$, then for any $n$ an $(n,q^k,d)_q$-systematic code with $1 \le k<1+\log_q d$ and $n < g_q(k,d)$ does not exists.
Suppose this is not true and let us find a contradiction. 
If $1 \le k<1+\log_q d$ then $\log_q d \le \log_q 2q = 1+\log_q 2$, and so $k$ can only be 1 or 2. 
We have already seen that if $k=1$ then $n < g_q(k,d)$ for any $n$, so suppose $k=2$. 
Suppose an $(n,q^2,d)_q$-systematic code exists with $n<\sum_{i=0}^1\left\lceil\frac{d}{q^i}\right\rceil=d+2$. 
Since by the Singleton bound $n \ge d+k-1$, then we can only have $n=d+1$, and therefore the only possible systematic code for which $n < g_q(2,d)$ must have parameter $(d+1,q^2,d)$, and so it is an MDS code. Let us call $C$ such a code. 
Being systematic, $C$ is the image of a map $F:(\FF_q)^2 \rightarrow (\FF_q)^{d+1}$ such that 
$F(x_1,x_2)= \left( x_1,x_2,f_3(x_1,x_2),\ldots,f_{d+1}(x_1,x_2) \right)$.
We can assume $F(0,0)=(0,\ldots,0)$. 
Any two codewords which have distance 1 in the two systematic components must have distance at least $d-1$ in the $d-1$ non-systematic components. 
Suppose there exists $\alpha,\beta_1,\beta_2 \in \FF_q,\beta_1 \ne \beta_2$ such that for a certain $i$ we have 
$f_i(\alpha,\beta_1)=f_i(\alpha,\beta_2) $.
In this case the distance between $F(\alpha,\beta_1)$ and $F(\alpha,\beta_2)$ is less than $d$. The same is true if we fix $\beta$ and we consider $\alpha_1$ and $\alpha_2$. This means that, whenever we fix $x_1=\alpha$ (respectively $x_2=\beta$) we need each $f_i(\alpha,x_2)$ (respectively $f_i(x_1,\beta)$) to be a permutation on $\FF_q$.
Due to this, for each fixed value $x_1=\alpha$, there exists a unique value $\beta$ such that $f_i(\alpha,\beta)=0$, for all $i$. 
Suppose now there exists $i\neq j$ such that $f_i(\alpha,\beta)=f_j(\alpha,\beta)=0$.
In this case the weight of $F(\alpha,\beta)$ is less than $d$, hence we have a contradiction (we assumed $0\in C$ and $\dd(C)=d$). 
We have obtained that if $f_i(\alpha,\beta)=0$, then $f_j(\alpha,\beta)\neq 0$ for all $j\neq i$.
We recall we have $f_3,\ldots, f_{d+1}$, and we have already proved that, for each fixed $\alpha$, there exists $\beta_1$ such that $f_3(\alpha,\beta_1)=0$. Hence if $f_4(\alpha,\beta_1)$ cannot be $0$ itself, there must exists another possible value $\beta_2$ such that $f_4(\alpha,\beta_2)=0$. Going on in this way we get a contradiction, in fact the number of $f_i$ is equal to $d-1$, and for them to be $0$ for different non-zero values $\beta_1,\ldots,\beta_{d-1}$, we need the field $\FF_q$ to contain at least $d$ different elements.
Hence we obtain the contradiction $q<d\leq 2q$ (by hypotesis) and $q\ge d$.
\end{proof}
\subsection{The case $q^{k-1} \mid d$}
 \label{section: q^(d-1)|d}
In this section we make use of the Plotkin bound to prove that there exists particular values of $d$ for which we can apply the Griesmer bound to nonlinear codes.
\begin{theorem}[Plotkin bound]\label{thm: plotkin}
Consider an $(n, M, d)_q$ code, with $M$ being the number of codewords in the code. If $n< \frac{qd}{q-1}$, then 
$M \leq d/(d-(1-1/q)n)$, or equivalently $n \ge d ((1-1/M)/(1-1/q))$.
\end{theorem}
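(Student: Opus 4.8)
The plan is to use the classical double-counting argument on the sum of all pairwise distances between codewords. Let $C$ be an $(n,M,d)_q$ code and define
\[
S = \sum_{\mathbf{x},\mathbf{y}\in C} \dist(\mathbf{x},\mathbf{y}),
\]
the sum being taken over all ordered pairs of codewords. First I would bound $S$ from below: since any two distinct codewords are at distance at least $d$, and there are $M(M-1)$ ordered pairs of distinct codewords (the identical pairs contributing $0$), we immediately get $S \ge M(M-1)d$.

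Next I would bound $S$ from above by exchanging the order of summation and accounting for the contribution of each of the $n$ coordinates separately. Fix a coordinate $i$ and, for each symbol $a\in\FF_q$, let $m_{i,a}$ be the number of codewords having $a$ in position $i$, so that $\sum_{a\in\FF_q} m_{i,a}=M$. The contribution of coordinate $i$ to $S$ counts exactly the ordered pairs of codewords that differ in that position, namely $\sum_{a\neq b} m_{i,a}m_{i,b} = M^2 - \sum_{a\in\FF_q} m_{i,a}^2$. The key step is to lower-bound $\sum_{a} m_{i,a}^2$ by convexity (Cauchy--Schwarz, or equivalently the power-mean inequality): distributing $M$ objects into $q$ classes as evenly as possible yields $\sum_{a} m_{i,a}^2 \ge M^2/q$. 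Hence each coordinate contributes at most $M^2(1-1/q)$, and summing over all $n$ coordinates gives $S \le nM^2(1-1/q)$.

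Combining the two bounds yields $M(M-1)d \le nM^2(1-1/q)$, and dividing by $M$ gives $(M-1)d \le nM(1-1/q)$. From here the two stated forms follow by elementary rearrangement. Dividing by $M$ once more produces $(1-1/M)d \le n(1-1/q)$, that is $n \ge d(1-1/M)/(1-1/q)$; alternatively, collecting the terms in $M$ gives $M\bigl(d-(1-1/q)n\bigr)\le d$, and under the hypothesis $n<qd/(q-1)$ the factor $d-(1-1/q)n$ is strictly positive, so we may divide to obtain $M \le d/\bigl(d-(1-1/q)n\bigr)$.

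I expect the only nonroutine point to be the per-coordinate convexity estimate $\sum_{a} m_{i,a}^2 \ge M^2/q$; everything else is bookkeeping. Note that the hypothesis $n<qd/(q-1)$ is needed precisely to guarantee that the denominator $d-(1-1/q)n$ is positive, so that the final division is valid and the resulting bound on $M$ is meaningful.
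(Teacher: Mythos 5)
Your proof is correct: the double-counting of $S$, the per-coordinate estimate $\sum_a m_{i,a}^2 \ge M^2/q$ via Cauchy--Schwarz, and the final rearrangement (using the hypothesis $n < qd/(q-1)$ only to ensure the denominator $d-(1-1/q)n$ is positive) together constitute the standard proof of the $q$-ary Plotkin bound. The paper itself does not prove this theorem --- it is quoted as a classical result with references to Plotkin and to Huffman--Pless --- and your argument is precisely the textbook one those sources give, so there is nothing to reconcile.
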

\begin{proposition}\label{prop: griesmer 1}
For $r\ge 1$ it holds $N_q(q^k,q^{k-1}r)\ge g_q(k,q^{k-1}r)$.
\end{proposition}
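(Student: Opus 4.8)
The plan is to derive the bound directly from the Plotkin bound (Theorem~\ref{thm: plotkin}), after first rewriting the right-hand side $g_q(k,q^{k-1}r)$ in closed form. I would begin by noting that, since $d=q^{k-1}r$, every index $i$ in the range $0\le i\le k-1$ satisfies $q^i\mid d$ (because $k-1-i\ge 0$), so each ceiling $\lceil d/q^i\rceil=q^{k-1-i}r$ is already an integer and may be dropped. Hence
\begin{equation*}
g_q(k,q^{k-1}r)=\sum_{i=0}^{k-1}\left\lceil\frac{q^{k-1}r}{q^i}\right\rceil=r\sum_{i=0}^{k-1}q^{k-1-i}=r\,\frac{q^k-1}{q-1}.
\end{equation*}
This is the value I would aim to match with the Plotkin estimate.

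Next I would take an arbitrary $(n,q^k,q^{k-1}r)_q$ code and split on the hypothesis of Theorem~\ref{thm: plotkin}. If $n<\frac{qd}{q-1}$, the Plotkin bound applies and yields $n\ge d\,\frac{1-1/M}{1-1/q}$ with $M=q^k$ and $d=q^{k-1}r$. A short computation then shows
\begin{equation*}
d\,\frac{1-1/M}{1-1/q}=q^{k-1}r\cdot\frac{q^k-1}{q^{k-1}(q-1)}=r\,\frac{q^k-1}{q-1}=g_q(k,q^{k-1}r),
\end{equation*}
so that $n\ge g_q(k,q^{k-1}r)$ exactly.

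It remains to treat the complementary case $n\ge\frac{qd}{q-1}$, where Plotkin provides nothing. Here I would simply observe that $\frac{qd}{q-1}=\frac{q^k r}{q-1}>\frac{(q^k-1)r}{q-1}=g_q(k,q^{k-1}r)$, since $r\ge 1$, so $n\ge\frac{qd}{q-1}>g_q(k,q^{k-1}r)$ holds trivially. As both cases give $n\ge g_q(k,q^{k-1}r)$ and the code was arbitrary, the minimum length $N_q(q^k,q^{k-1}r)$ satisfies the claimed inequality. The only genuine obstacle is the algebraic verification that the Plotkin lower bound coincides exactly with the simplified Griesmer value; everything else is a routine case distinction. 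I would double-check in particular that the ceilings really do vanish (this uses $i\le k-1$) and that the Plotkin threshold $\frac{qd}{q-1}$ strictly exceeds $g_q(k,q^{k-1}r)$, which is what makes the second case immediate.
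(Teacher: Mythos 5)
Your proposal is correct and follows essentially the same route as the paper: both reduce $g_q(k,q^{k-1}r)$ to the closed form $r\frac{q^k-1}{q-1}$, observe that this is below the Plotkin threshold $\frac{qd}{q-1}$, and then invoke the Plotkin bound with $M=q^k$ to recover exactly that value. The paper phrases it as a proof by contradiction (assuming $n<g_q(k,d)$ forces the Plotkin hypothesis) while you use a direct case split, but the mathematical content is identical.
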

\begin{proof}
Suppose there exists an $(n,q^k,q^{k-1}r)_q$-code $C$ that does not satisfies Griesmer bound. 
Hence $n<\sum_{i=0}^{k-1}\left\lceil\frac{q^{k-1}r}{q^i}\right\rceil$. 
Observe that in this case 
$
\sum_{i=0}^{k-1}\left\lceil\frac{q^{k-1}r}{q^i}\right\rceil=\sum_{i=0}^{k-1}\frac{q^{k-1}r}{q^i}
=
q^{k-1}r\sum_{i=0}^{k-1}\frac{1}{q^i}
$.
Since 
$\sum_{i=0}^{k-1}\frac{1}{q^i}=\frac{1-\frac{1}{q^k}}{1-\frac{1}{q}}$,
we obtain
\begin{equation}\label{eq:_no_griesmer}
n<q^{k-1}r\left(\frac{1-1/q^k}{1-1/q}\right)\,.
\end{equation}
We also observe that
$
n < 
q^{k-1}r
\left((1-1/q^k)/(1-1/q)\right)
<
q^{k-1}r
\left(1/(1-1/q)\right) = d/(1-1/q),
$
and we can write this inequality as $n<\frac{dq}{q-1}$,
which is the hypothesis for the Plotkin bound. 
Applying it, we get
$
q^k \leq 
\left\lfloor\frac{d}{d-n\left(1-1/q\right)}\right\rfloor
\leq
\frac{d}{d-n\left(1-1/q\right)}
$, 
i.e. $n\ge d\left(\frac{1-1/q^k}{1-1/q}\right)$, 
which contradicts equation \eqref{eq:_no_griesmer}. Hence each $(n,q^k,q^{k-1}r)_q$-code satisfies the Griesmer bound. \\
\end{proof}
Note that Proposition  \ref{prop: griesmer 1} is not restricted to systematic codes, but it holds for nonlinear codes with at least $q^k$ codewords, as next corollary explaines.
\begin{corollary}
Let $M\ge q^k$. For $r\ge 1$ it holds $N_q(M,q^{k-1}r)\ge g_q(k,q^{k-1}r)$.
\end{corollary}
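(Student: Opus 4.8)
The plan is to mirror the proof of Proposition~\ref{prop: griesmer 1} almost verbatim, exploiting the fact that the Plotkin bound produces an \emph{upper} bound on the number of codewords. The only place in that proof where the cardinality $q^k$ enters is the single step $q^k \le \lfloor d/(d-n(1-1/q))\rfloor$, and this step reads as an upper bound on the number of codewords; hence the same chain of inequalities survives when we merely assume $q^k \le M$ rather than $M = q^k$.

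Concretely, I would argue by contradiction. Suppose $C$ is a $q$-ary code with $M \ge q^k$ codewords, minimum distance $d = q^{k-1}r$, and length $n = \len(C) < g_q(k,d)$. First I would reproduce the elementary computation from Proposition~\ref{prop: griesmer 1} showing that
\begin{equation*}
n < g_q(k,d) = q^{k-1}r\left(\frac{1-1/q^k}{1-1/q}\right) < \frac{dq}{q-1},
\end{equation*}
so that the hypothesis of the Plotkin bound (Theorem~\ref{thm: plotkin}) is met. Applying it to $C$ yields $M \le d/(d-n(1-1/q))$. Combining this with $q^k \le M$ gives $q^k \le d/(d-n(1-1/q))$, which is exactly the inequality obtained in Proposition~\ref{prop: griesmer 1}; rearranging it produces $n \ge d(1-1/q^k)/(1-1/q) = g_q(k,d)$, contradicting the assumption $n < g_q(k,d)$.

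There is no genuine obstacle here, only a point of care: one must check that the denominator $d-n(1-1/q)$ is strictly positive, so that the Plotkin inequality may be rearranged without reversing its direction. This positivity is equivalent to $n < dq/(q-1)$, which is precisely the Plotkin hypothesis already established above, so the rearrangement is legitimate. An alternative route would be to pass from $C$ to a subcode $C'$ consisting of exactly $q^k$ of its codewords---deletion can only increase the minimum distance, so $\dd(C') \ge d$---and then invoke Proposition~\ref{prop: griesmer 1} together with the monotonicity of $g_q(k,\cdot)$ in its second argument; the direct Plotkin argument above is preferable because it avoids having to track the enlarged distance.
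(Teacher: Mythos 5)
Your proof is correct and follows the same route the paper intends: the only place the cardinality enters the proof of Proposition~\ref{prop: griesmer 1} is the Plotkin step, which is an upper bound on the number of codewords, so replacing $q^k$ by $M\ge q^k$ leaves the whole chain of inequalities intact. The paper states the corollary without a written proof, relying on exactly this observation, so your argument (including the check that the denominator is positive) simply fills in the details of what the authors left implicit.
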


\subsection{The case $d=rq^l, 1 \le r < q$}
\begin{lemma}\label{lem: small r}
Let $q$ be fixed, $d=q^lr$ for a certain $r$ such that $1\leq r< q$ and $l\ge 0$, and let $k$ such that $q^{k-1}\leq d$. Then $N_q(k,d)\ge g_q(k,d)$.
\end{lemma}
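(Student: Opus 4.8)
The plan is to reduce this to a single application of the Plotkin bound, exactly as in the proof of Proposition~\ref{prop: griesmer 1}, after first computing $g_q(k,d)$ in closed form under the stated hypotheses. The initial observation is that $q^{k-1}\le d$ together with $d=q^lr$ and $r<q$ forces $k-1\le l$: indeed $q^{k-1}\le q^l r<q^{l+1}$ gives $k-1<l+1$. Consequently, for every index $i$ with $0\le i\le k-1$ we have $l-i\ge 0$, so $d/q^i=q^{l-i}r$ is an integer and $\left\lceil d/q^i\right\rceil=q^{l-i}r$. This makes every ceiling in the definition of $g_q(k,d)$ exact, and summing the resulting geometric series yields
\begin{equation*}
g_q(k,d)=\sum_{i=0}^{k-1}\frac{d}{q^i}=d\cdot\frac{1-1/q^k}{1-1/q}.
\end{equation*}

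The second step mirrors Proposition~\ref{prop: griesmer 1}. Arguing by contradiction, I would assume there is an $(n,q^k,d)_q$-code $C$ with $n<g_q(k,d)$. Since $1-1/q^k<1$, the closed form immediately gives
\begin{equation*}
n<d\cdot\frac{1-1/q^k}{1-1/q}<\frac{d}{1-1/q}=\frac{qd}{q-1},
\end{equation*}
which is precisely the hypothesis needed to invoke the Plotkin bound (Theorem~\ref{thm: plotkin}). Applying it with $M=q^k$ yields $q^k\le d/(d-n(1-1/q))$, which rearranges to $n\ge d\,(1-1/q^k)/(1-1/q)=g_q(k,d)$, contradicting the assumption $n<g_q(k,d)$. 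Hence no such code exists and $N_q(q^k,d)\ge g_q(k,d)$.

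The only genuinely new content compared with Proposition~\ref{prop: griesmer 1} is the first step: there $d$ was taken of the special shape $q^{k-1}r$, whereas here the weaker divisibility $d=q^lr$ with $r<q$ together with the range condition $q^{k-1}\le d$ is exactly what guarantees that all the ceilings collapse to the corresponding exact fractions. I expect this collapse of the ceilings --- verifying $k-1\le l$ and hence the integrality of every $d/q^i$ for $i\le k-1$ --- to be the point requiring the most care; once it is in place, the Plotkin computation is the same geometric-series manipulation already carried out, and, as in Proposition~\ref{prop: griesmer 1}, the conclusion in fact holds for all nonlinear codes with at least $q^k$ codewords, not merely systematic ones.
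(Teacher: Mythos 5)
Your proof is correct and follows essentially the same route as the paper: the paper likewise first derives $k-1\le l$ from $q^{k-1}\le d=q^lr<q^{l+1}$ and then invokes Proposition~\ref{prop: griesmer 2} with $h=\min(k-1,l)=k-1$, whose own proof is exactly the Plotkin-bound computation of Proposition~\ref{prop: griesmer 1} that you carry out inline. The only difference is presentational --- you unfold the (forward-referenced) proposition into a direct application of Theorem~\ref{thm: plotkin}, which if anything makes the argument more self-contained.
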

\begin{proof}
Being $1\leq r<q$, the hypothesis $q^{k-1}\leq d$ is equivalent to $k-1\leq l$. We use Proposition \ref{prop: griesmer 2} and we set $h=\min(k-1,l)$, obtaining
$
n\ge \sum_{i=0}^{k-1}\left\lceil\frac{d}{q^i}\right\rceil.
$
\end{proof}
\begin{theorem}\label{thm: griesmer qlr}
Let $1\leq r<q$ and $l$ a positive integer. Then $S_q(k,q^lr)\ge g_q(k,q^lr)$.
\end{theorem}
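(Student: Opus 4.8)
The plan is to combine the reduction furnished by Theorem \ref{thm: minimum k q} with the already-established lower bound of Lemma \ref{lem: small r}, so that almost no new work is required. By Theorem \ref{thm: minimum k q}, to prove $S_q(k,q^lr)\ge g_q(k,q^lr)$ for every positive $k$ it is enough to verify the inequality only on the finite range $1\le k<1+\log_q d$, where we write $d=q^lr$. Thus the whole task reduces to establishing the Griesmer bound for these finitely many small values of $k$.

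First I would translate the range condition into the hypothesis of Lemma \ref{lem: small r}. For any $k$ with $1\le k<1+\log_q d$ we have $k-1<\log_q d$, which is equivalent to $q^{k-1}<d$, and in particular $q^{k-1}\le d$. Since $l$ is a positive integer and $1\le r<q$, the parameters $d=q^lr$ satisfy exactly the assumptions of Lemma \ref{lem: small r}. Hence every $k$ left to be checked after the reduction lies inside the region where that lemma applies, and no value of $k$ in the range escapes it.

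Next I would invoke Lemma \ref{lem: small r} directly to obtain $N_q(q^k,d)\ge g_q(k,d)$ for each such $k$. Because any systematic code of dimension $k$ is in particular a code with $q^k$ codewords, the chain $N_q(q^k,d)\le S_q(k,d)$ recorded in the introduction upgrades this to $S_q(k,d)\ge N_q(q^k,d)\ge g_q(k,d)$ throughout $1\le k<1+\log_q d$. A final application of Theorem \ref{thm: minimum k q} then promotes this finite verification to all positive $k$, yielding the claim.

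The only step requiring genuine attention is the first one, namely confirming that the interval $k<1+\log_q d$ is contained in the region $q^{k-1}\le d$ on which Lemma \ref{lem: small r} is valid; everything else is bookkeeping through the inequalities of the introduction. The substantive combinatorial content—ultimately a Plotkin-type argument feeding Lemma \ref{lem: small r}—has already been carried out, so I do not expect any real obstacle here beyond making the range matching explicit.
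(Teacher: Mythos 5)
Your proof is correct and follows essentially the same route as the paper: reduce to the finite range $1\le k<1+\log_q d$ via Theorem \ref{thm: minimum k q}, observe that this range satisfies $q^{k-1}\le d$, and then invoke Lemma \ref{lem: small r} together with $S_q(k,d)\ge N_q(q^k,d)$. You spell out the range translation and the passage from $N_q$ to $S_q$ slightly more explicitly than the paper does, but the argument is the same.
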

\begin{proof}
To prove that the Griesmer bound is true for these particular choices of $d$ we use Theorem \ref{thm: minimum k q}, 
hence we only need to prove that the Griesmer bound is true for all choices of $k$ such that $q^{k-1}\leq d$.
\\
We use now Lemma \ref{lem: small r}, which ensures that all such codes respect the Griesmer bound.
\end{proof}
\begin{corollary}\label{cor: griesmer 2l}
If $q=2$ then for each positive integer $l$ it holds $S_2(k,2^l)\ge g_2(k,2^l)$.
\end{corollary}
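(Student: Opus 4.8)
The plan is to obtain this as an immediate specialization of Theorem~\ref{thm: griesmer qlr}, so there is essentially no new work to do beyond checking that the hypotheses of that theorem are met for the binary parameters in question. The key observation is that setting $q=2$ in the statement ``$1 \le r < q$'' forces $r = 1$, since $r$ must be a positive integer strictly less than $2$. Consequently the family of distances $d = q^l r$ covered by Theorem~\ref{thm: griesmer qlr} collapses, when $q = 2$, to exactly the powers of two: $d = 2^l \cdot 1 = 2^l$.

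First I would fix $q = 2$ and let $l$ be an arbitrary positive integer, and record that $r = 1$ is the only admissible value of $r$. Then I would invoke Theorem~\ref{thm: griesmer qlr} with these parameters to conclude directly that
\begin{equation*}
S_2(k, 2^l) = S_2(k, 2^l \cdot 1) \ge g_2(k, 2^l \cdot 1) = g_2(k, 2^l)
\end{equation*}
for every dimension $k$. No case analysis on $k$ is needed here, because the reduction to the regime $q^{k-1} \le d$ (via Theorem~\ref{thm: minimum k q}) and the use of the Plotkin bound have already been carried out inside the proofs of Theorem~\ref{thm: griesmer qlr} and Lemma~\ref{lem: small r}.

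I do not expect any genuine obstacle in this argument: all the substance lives in the underlying Theorem~\ref{thm: griesmer qlr}, and the corollary is purely a matter of specializing $q$. The only point worth stating explicitly, to avoid any appearance of a gap, is the forcing $r = 1$, since a careless reading might suggest that distances of the form $2^l r$ with $r \ge 2$ are also covered, whereas such $r$ fall outside the hypothesis $r < q = 2$ and are instead handled by the divisibility case of Section~\ref{section: q^(d-1)|d} rather than by Theorem~\ref{thm: griesmer qlr}.
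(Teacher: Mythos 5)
Your proposal is correct and matches the paper's own proof, which also derives the corollary directly from Theorem~\ref{thm: griesmer qlr}; your explicit note that $q=2$ forces $r=1$ is just a spelled-out version of the same specialization.
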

\begin{proof}
It follows directly from Theorem \ref{thm: griesmer qlr}.
\end{proof}
\subsection{The case $d=2^r - 2^s$}
%
%
In this section we prove that the Griesmer bound holds for all binary systematic codes whose distance is the difference of two powers of $2$. We need the following lemmas.
\begin{lemma}\label{lem: g(r+1)=2g(r)}
Let $r$ be a positive integer, and let $k\leq r+1$. Then $g_2(k,2^{r+1}) = 2 g_2(k,2^r)$.
\end{lemma}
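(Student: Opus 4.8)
## Proof Plan for Lemma~\ref{lem: g(r+1)=2g(r)}

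\noindent\textbf{The goal.}
The plan is to prove the identity $g_2(k,2^{r+1}) = 2\,g_2(k,2^r)$ for $k \le r+1$ by directly manipulating the defining summation $g_2(k,d) = \sum_{i=0}^{k-1}\lceil d/2^i\rceil$. The key observation I would exploit is that when $d$ is itself a power of $2$, the ceiling function becomes superfluous over the entire range of the sum, provided $k$ is not too large relative to the exponent. This is exactly what the hypothesis $k \le r+1$ guarantees.

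\medskip\noindent\textbf{Main steps.}
First I would write out both sides using the definition:
\begin{equation*}
g_2(k,2^{r+1}) = \sum_{i=0}^{k-1}\left\lceil\frac{2^{r+1}}{2^i}\right\rceil,
\qquad
g_2(k,2^r) = \sum_{i=0}^{k-1}\left\lceil\frac{2^{r}}{2^i}\right\rceil.
\end{equation*}
The crucial step is to remove the ceilings. For each index $i$ in the range $0 \le i \le k-1$, the hypothesis $k \le r+1$ gives $i \le k-1 \le r$, so that $2^i$ divides $2^r$ (and a fortiori divides $2^{r+1}$). Hence $2^r/2^i = 2^{r-i}$ and $2^{r+1}/2^i = 2^{r+1-i}$ are both integers, and the ceilings act trivially. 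This lets me rewrite the two sums without ceilings as $\sum_{i=0}^{k-1} 2^{r-i}$ and $\sum_{i=0}^{k-1} 2^{r+1-i}$ respectively. Finally, factoring a single $2$ out of the second sum gives $\sum_{i=0}^{k-1} 2^{r+1-i} = 2\sum_{i=0}^{k-1} 2^{r-i}$, which is precisely $2\,g_2(k,2^r)$, completing the argument.

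\medskip\noindent\textbf{The main obstacle.}
The only delicate point is the justification that the ceiling is redundant, and specifically verifying that the bound on $i$ is tight enough: I must check that the \emph{largest} index in the sum, namely $i = k-1$, still satisfies $2^i \mid 2^r$. This is where the hypothesis $k \le r+1$ is used in full strength, since it yields $k-1 \le r$. If instead $k > r+1$ held, then for $i > r$ the term $\lceil 2^r/2^i\rceil$ would equal $1$ (a genuine rounding up from a fraction strictly between $0$ and $1$) while $\lceil 2^{r+1}/2^i\rceil$ could equal $1$ or $2$, and the clean doubling relation would break down. Thus the whole lemma hinges on this single divisibility check, which I would state explicitly as the pivot of the proof.
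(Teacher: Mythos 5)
Your proposal is correct and follows essentially the same route as the paper: both arguments observe that $k\le r+1$ forces $i\le k-1\le r$, so every ceiling in $g_2(k,2^{r+1})$ and $g_2(k,2^r)$ is redundant, after which factoring out a $2$ from the geometric sum gives the identity. No gaps; your explicit divisibility check $2^i\mid 2^r$ is exactly the pivot the paper relies on.
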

\begin{proof}
The hypothesis $k\leq r+1$ implies that for any $i\leq k-1$, both $\left\lceil\frac{2^{r+1}}{2^i}\right\rceil=\frac{2^{r+1}}{2^i}$ and $\left\lceil\frac{2^{r}}{2^i}\right\rceil=\frac{2^{r}}{2^i}$. To prove our claim it is therefore enough to explicit $g_2(k,2^{r+1})$. Indeed we have
$$
g_2(k,2^{r+1})=\sum_{i=0}^{k-1}\left\lceil
\frac{2^{r+1}}{2^i}
\right\rceil
=\sum_{i=0}^{k-1}
\frac{2^{r+1}}{2^i}
=2\sum_{i=0}^{k-1}
\frac{2^{r}}{2^i}
=2\sum_{i=0}^{k-1}
\left\lceil\frac{2^{r}}{2^i}\right\rceil
=
2g_2(k,2^r)
$$
\end{proof}
\begin{lemma}\label{lem: increase g(k,d)}
For each $k$ and $d$ it holds
\begin{equation}\label{eq: inrease g(k,d)}
g_2(k,d+1)=g_2(k,d)+\min(k,l+1),
\end{equation}
where $l$ is the maximum integer such that $2^l$ divides $d$.
\end{lemma}
\begin{proof}
We consider $l$ as in the statement of the lemma, then $d=2^lr$, where $r$ is odd. 
We consider first the case $k\leq l+1$. The Griesmer bound for this choice of $k$ and $d$ is
\begin{equation}\nonumber
g_2(k,d+1)=\sum_{i=0}^{k-1}\left\lceil\frac{2^lr+1}{2^i}\right\rceil,
\end{equation}
and we observe that for each $i$ we have
$$
\left\lceil\frac{2^lr+1}{2^i}\right\rceil=
\frac{2^lr}{2^i}+ \left\lceil\frac{1}{2^i}\right\rceil=
\frac{2^lr}{2^i}+1=
\left\lceil\frac{2^lr}{2^i}\right\rceil+1.
$$
Therefore 
\begin{equation}\label{eq: k<l+1}
g_2(k,d+1)
=\sum_{i=0}^{k-1}\left(\left\lceil\frac{2^lr}{2^i}\right\rceil+1\right)
=g_2(k,d)+k.
\end{equation}
On the other hand, if $k>l+1$ we can split the sum in the following way:
\begin{equation}\label{eq: split sum}
g_2(k,d+1)=\left(\sum_{i=0}^{l}\left\lceil\frac{2^lr+1}{2^i}\right\rceil\right)
+
\left(\sum_{i=l+1}^{k-1}\left\lceil\frac{2^lr+1}{2^i}\right\rceil\right).
\end{equation}
For the first sum we make use of the same argument as above, while for the second sum we observe that $i>l$, which implies
$$
\left\lceil\frac{2^lr+1}{2^i}\right\rceil=\left\lceil\frac{2^lr}{2^i}\right\rceil.
$$
Putting together the two arguments, equation \eqref{eq: split sum} becomes
\begin{equation}\nonumber
g_2(k,d+1)=\left(\sum_{i=0}^{l}\left\lceil\frac{2^lr}{2^i}\right\rceil+l+1\right)
+
\left(\sum_{i=l+1}^{k-1}\left\lceil\frac{2^lr}{2^i}\right\rceil\right)
=
\sum_{i=0}^{k-1}\left\lceil\frac{2^lr}{2^i}\right\rceil+l+1,
\end{equation}
and the term on the right-hand side is $g_2(k,d)+l+1$. Together with \eqref{eq: k<l+1} this concludes the proof.
\end{proof}
\begin{lemma}\label{lem: g < 2^r}
If $k\ge r$, then $g_2(k,2^r)<2^{r+1}$.
\end{lemma}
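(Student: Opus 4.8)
The plan is to use the fact that $d=2^r$ is a power of two to evaluate $g_2(k,2^r)=\sum_{i=0}^{k-1}\lceil 2^r/2^i\rceil$ in closed form, after which the bound becomes a matter of inspection. The key observation is that the summands split into two regimes at the index $i=r$: for $i\le r$ the power $2^i$ divides $2^r$ exactly, so $\lceil 2^r/2^i\rceil = 2^{r-i}$, whereas for $i>r$ we have $0<2^r/2^i\le \tfrac12$ and therefore $\lceil 2^r/2^i\rceil=1$. The first step is thus to split the defining sum at $i=r$ and handle the two blocks separately.

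For the block $0\le i\le\min(k-1,r)$ the contribution is a geometric sum of consecutive powers of two, which I would evaluate directly; in particular when $k\le r$ the entire sum is of this form and equals $\sum_{i=0}^{k-1}2^{r-i}=2^{r+1}-2^{\,r+1-k}$. Since $2^{\,r+1-k}\ge 2>0$ in that range, this already yields the strict inequality $g_2(k,2^r)<2^{r+1}$ with room to spare. For the remaining indices $i>r$ each ceiling contributes exactly $1$, so the tail adds only the count of such indices, giving a closed form of the shape $\bigl(2^{r+1}-1\bigr)+(\#\{i : r<i\le k-1\})$, which I would assemble with the geometric block to obtain the total.

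The main point to watch is \emph{strictness}: the constant tail terms raise $g_2$ by one for each index beyond $r$, so the inequality is tight near the top of the admissible range and the argument must keep the tail short. This is precisely what the ambient setting supplies, since by Theorem \ref{thm: minimum k q} only values $k<1+\log_2 d$ are relevant, and for the distances $d=2^r-2^s<2^r$ treated in this subsection this forces $k$ to stay small enough that the geometric block dominates and the strict bound $g_2(k,2^r)<2^{r+1}$ holds. Hence the only genuine obstacle is the bookkeeping of the tail count against the range constraint on $k$; once the two blocks are combined and that restriction is invoked, the inequality follows.
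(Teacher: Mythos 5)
Your core computation coincides with the paper's: for the indices $i$ with $2^i\mid 2^r$ the ceilings are exact, so $g_2(k,2^r)=2^r\sum_{i=0}^{k-1}2^{-i}=2^{r+1}-2^{r+1-k}<2^{r+1}$, which is the paper's entire one-line proof --- except that the paper's proof opens with ``Due to $k\le r$'', i.e.\ it silently proves the lemma under the hypothesis $k\le r$ rather than the printed $k\ge r$. Your decision to take the printed hypothesis at face value and split the sum at $i=r$ is what distinguishes your write-up, and it in fact exposes the problem: your two-block formula gives $g_2(k,2^r)=\left(2^{r+1}-1\right)+(k-1-r)$ for $k\ge r+1$, which is $<2^{r+1}$ only for $k\le r+1$ and reaches or exceeds $2^{r+1}$ as soon as $k\ge r+2$. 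So the lemma is false as literally stated; the hypothesis must be read as $k\le r$, consistent with the only place it is invoked (the proof of Theorem \ref{thm:griesmer_2r2s}, where $k\le r$ has already been arranged).

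The one genuine weakness is your closing paragraph: you repair the $k>r$ regime by appealing to ``the ambient setting'' and Theorem \ref{thm: minimum k q} to keep $k$ small, but a lemma must be provable from its own hypotheses, and $k<1+\log_2 d$ is not one of them --- nor do you pin down the exact threshold that your own tail count makes available. The clean fix is to restate the hypothesis as $k\le r$ (or, given your sharper analysis, $k\le r+1$), after which the geometric-block computation you already carried out finishes the proof with no reference to the surrounding theorem.
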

\begin{proof}
Due to $k\leq r$, for $i<k$ it holds $\left\lceil\frac{2^{r}}{2^i}\right\rceil=\frac{2^{r}}{2^i}$. We can write the Griesmer bound as
$$
g_2(k,2^r)=\sum_{i=0}^{k-1}\frac{2^r}{2^i}=2^r\sum_{i=0}^{k-1}\frac{1}{2^i}<2^r\cdot 2.
$$
\end{proof}

\begin{theorem}\label{thm:griesmer_2r2s}
Let $r$ and $s$ be two positive integers such that $r>s$, and let $d=2^r-2^s$. Then $S_2(k,d)\ge g_2(k,d)$.
\end{theorem}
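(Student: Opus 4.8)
The plan is to reduce the general statement to the case $d = 2^r$ already handled in Corollary \ref{cor: griesmer 2l}, using Theorem \ref{thm: minimum k q} to restrict attention to small $k$ and the incremental formula of Lemma \ref{lem: increase g(k,d)} to track how both $S_2(k,\cdot)$ and $g_2(k,\cdot)$ behave as the distance grows from $2^r - 2^s$ up to the next power of two. By Theorem \ref{thm: minimum k q} it suffices to verify $S_2(k,d) \ge g_2(k,d)$ for all $k$ with $1 \le k < 1 + \log_2 d$; since $d = 2^r - 2^s < 2^r$, this means $k \le r$, so we only ever deal with indices $k$ in the range where the ceilings in $g_2(k,\cdot)$ collapse to ordinary fractions (as exploited in Lemmas \ref{lem: g(r+1)=2g(r)} and \ref{lem: g < 2^r}).

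First I would fix such a $k \le r$ and suppose, for contradiction, that a systematic code $C$ with parameters $(n, 2^k, d)_2$ and $n < g_2(k,d)$ exists, where $d = 2^r - 2^s$. The key idea is to lengthen the distance without paying too much in $n$. Applying Lemma \ref{lem: reducing d} in reverse spirit is not available, so instead I would build from $C$ a code of the same dimension $k$ but larger distance by extending coordinates, or more cleanly, compare against the known bound at $d = 2^r$. Concretely, I would repeatedly invoke Lemma \ref{lem: increase g(k,d)} to compute the total gap $g_2(k, 2^r) - g_2(k, 2^r - 2^s)$ as a telescoping sum of terms $\min(k, l+1)$, where $l$ ranges over the $2$-adic valuations of the integers from $2^r - 2^s$ up to $2^r - 1$. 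Because $k \le r$ and these intermediate distances are all strictly below $2^r$, each increment is controlled, and the sum should evaluate to exactly $s \cdot k$ minus a correction, matching the scaling predicted by Lemma \ref{lem: g(r+1)=2g(r)}.

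The cleanest route, which I would pursue as the main line, is to relate $g_2(k, 2^r - 2^s)$ directly to $g_2(k-s', 2^{r'})$-type quantities via the multiplicative Lemma \ref{lem: g(r+1)=2g(r)}: writing $2^r - 2^s = 2^s(2^{r-s} - 1)$ suggests factoring out the power $2^s$ and reducing to the odd-multiplier analysis, while the binary expansion $2^r - 2^s = \sum_{j=s}^{r-1} 2^j$ gives an explicit closed form for $g_2(k,d)$ when $k \le r$. I would compute this closed form explicitly and then exhibit a systematic (indeed linear) code meeting it, for instance by a direct Solomon--Stiffler-style construction or by concatenation of simplex codes, thereby showing $S_2(k,d) = g_2(k,d)$ and closing the contradiction.

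The main obstacle I anticipate is the bookkeeping in the telescoping sum from Lemma \ref{lem: increase g(k,d)}: the term $\min(k, l+1)$ switches behavior precisely when the $2$-adic valuation $l$ of the running distance crosses $k-1$, and since we are summing across $2^s$ consecutive integers whose valuations follow the ruler sequence, one must argue carefully that the accumulated increments reassemble into the desired closed form rather than merely bounding them. A secondary difficulty is ensuring that the reduction to small $k$ via Theorem \ref{thm: minimum k q} genuinely covers the boundary case $k = r$, where some ceilings in $g_2(k,d)$ no longer collapse; here I would handle $k = r$ separately, using Lemma \ref{lem: g < 2^r} to pin down the value and verify the bound by hand.
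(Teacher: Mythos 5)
Your setup matches the paper's: restrict to $k\le r$ via Theorem \ref{thm: minimum k q}, assume a code $C$ with $n<g_2(k,d)$, and use Lemma \ref{lem: increase g(k,d)} to telescope the gap $g_2(k,2^r)-g_2(k,2^r-2^s)$ (which, for $k>s+1$, comes out to $2^{s+1}-1$, not ``$s\cdot k$ minus a correction''). But your main line then breaks down on a logical point: you propose to ``exhibit a systematic (indeed linear) code meeting'' the closed form for $g_2(k,d)$, ``thereby showing $S_2(k,d)=g_2(k,d)$ and closing the contradiction.'' Constructing a code of length $g_2(k,d)$ only proves the upper bound $S_2(k,d)\le g_2(k,d)$; it does nothing to exclude the hypothetical shorter code $C$, whose existence is exactly what the theorem must rule out. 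No contradiction is reached, and the telescoping computation is left dangling with no connection to a lower bound on $n$.

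The missing engine is an amplification argument. The paper takes the $t$-fold repetition $C_t$ of $C$, a $(tn,2^k,td)_2$-systematic code, and views $m=n/d$ as the slope of the line $\dd(C_t)\mapsto\len(C_t)$. The gap computation you set up is used precisely to show $m<g_2(k,2^r)/2^r$ (via the equivalent inequality $2^s g_2(k,2^r)<2^r(2^{s+1})$, i.e.\ $g_2(k,2^r)<2^{r+1}$, which is Lemma \ref{lem: g < 2^r}). Since Lemma \ref{lem: g(r+1)=2g(r)} makes $b\mapsto g_2(k,2^b)$ a line of slope $g_2(k,2^r)/2^r>m$, one can choose $(t,b)$ with $td>2^b$ and $tn<g_2(k,2^b)$; Lemma \ref{lem: reducing d} then yields a $(tn,2^k,2^b)_2$-systematic code violating Corollary \ref{cor: griesmer 2l}, the contradiction. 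Without this repetition step (or some substitute that converts the slope inequality into a violation of the already-proved case $d=2^\ell$), your argument does not close.
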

\begin{proof}
If $r=s+1$, then $2^r-2^s=2^s$, hence we can apply Corollary \ref{cor: griesmer 2} and our claim holds. Therefore we can assume $r\ge s+1$ in the rest of the proof.
Let us suppose there exists $s<r$ such that $S_2(k,2^r-2^s)< g_2(k, 2^r-2^s)$, 
i.e. the Griesmer bound does not hold for some $(n,2^k,d)_2$-systematic code $C$, with $d=2^r-2^s$ and $n = S_2(k,d)$. Due to Theorem \ref{thm: minimum k q}, we can consider the case $k<1+\log_2d$, so we put ourselves in the case $k\leq r$.
\\
We call $m$ the ratio $n/d$, which in the case of $C$ is
\begin{equation}\nonumber
m=\frac{S_2(k,2^r-2^s)}{2^r-2^s}
\leq \frac{g_2(k,2^r-2^s)-1}{2^r-2^s}
\end{equation}
We claim that $m<g_2(k,2^r)/(2^r)$. First we observe that if $k\leq r$, then 
\begin{equation}\nonumber
\frac{g_2(k,2^r)}{2^r}=
\sum_{i=0}^{k-1}\frac{1}{2^i}=
2\left(1-\frac{1}{2^k}
\right).
\end{equation}
We consider now the ratio $m$:
\begin{equation}\label{eq: m}
m \leq \frac{g_2(k,2^r-2^s)-1}{2^r-2^s}=
\frac{1}{2^r-2^s}\sum_{i=0}^{k-1}\left\lceil
\frac{2^{r}-2^s}{2^i}
\right\rceil
-\frac{1}{2^r-2^s}
\end{equation}
We start from the case $k\leq (s+1)$, and we can write \eqref{eq: m} as
\begin{equation}\nonumber
m < 
\frac{1}{2^r-2^s}\sum_{i=0}^{k-1}
\frac{2^{r}-2^s}{2^i}=
\sum_{i=0}^{k-1}\frac{1}{2^i}=
2\left(1-\frac{1}{2^k}
\right),
\end{equation}
so in this case $m<g_2(k,2^r)/(2^r)$. We consider now the case $k>s+1$, and we write our claim in the following equivalent way: 
$$
2^r(g_2(k,2^r-2^s)-1)<(2^r-2^s)g_2(k,2^r).
$$
Rearranging the terms we obtain
\begin{equation}\label{eq: equivalent claim m}
2^sg_2(k,2^r)<2^r(g_2(k,2^r)-g_2(k,2^r-2^s)+1),
\end{equation}
and we focus on the difference $g_2(k,2^r)-g_2(k,2^r-2^s)$. For any $d'$ in the range $2^r-2^s\leq d'<2^r$ we can apply Lemma \ref{lem: increase g(k,d)}, observing that $d'=2^lr$ where $l\leq s$, and this implies $k>l+1$. We obtain
$$
g_2(k,d'+1)=g_2(k,d')+l+1.
$$
Applying it for all distances from $2^r-2^s$ till we reach $2^r$ we obtain
\begin{equation}\label{eq: difference g}
g_2(k,2^r)-g_2(k,2^r-2^s)=2^{s+1}-1
\end{equation}
We substitute now \eqref{eq: difference g} into \eqref{eq: equivalent claim m}, which becomes
\begin{equation}\nonumber
2^sg_2(k,2^r)<2^r\cdot 2^{s+1}\qquad\Rightarrow \qquad g_2(k,2^r)<2^{r+1},
\end{equation}
and this is always true provided $k\leq r$, as shown in Lemma \ref{lem: g < 2^r}. 
\\
We now consider the $(tn,2^k,td)_2$-systematic code $C_t$ obtained by repeating $t$ times the code $C$. We remark that the value $m$ can be thought as the slope of the line $\dd(C_t)\mapsto \len(C_t)$, and we proved that $m<g_2(k,2^r)/(2^r)$. On the other hand, since $k\leq r$ we can apply Lemma \ref{lem: g(r+1)=2g(r)}, which ensures that $g_2(k,2^{r+b})=2^bg_2(k,2^r)$, namely the Griesmer bound computed on the powers of $2$ is itself a line, and its slope is strictly greater than $m$. Due to this we can find a pair $(t,\;b)$ such that
\begin{enumerate}
 \item  $td > 2^b$,
 \item  $tn < g_2(k, 2^b)$. 
\end{enumerate}
This means that we can find a systematic code with distance $d > 2^b$ and length $n<g_2(k,2^b)$. We can apply Lemma \ref{lem: reducing d}, and find a systematic code with the same length and distance equal to $2^b$, which means we have an $(tn,k,2^b)_2$-systematic code for which $n < g_2(k,2^b)$. 
This however contradicts Corollary \ref{cor: griesmer 2l}, hence for each $k\leq r$ we have 
$$
S_2(k,2^r-2^s)\ge g_2(k,2^r-2^s).
$$
Finally, observe that $k\leq r$ implies $k\leq \log_2(2^r)=\left\lceil\log_2(2^r-2^s)\right\rceil<1+\log_2d$, so we can apply Theorem \ref{thm: minimum k q} and conclude.
\end{proof}
\begin{corollary}\label{cor: odd distances}
Let $r$ and $s$ two positive integers such that $r>s$, and let $d=2^r-1$ or $d=2^r-2^s-1$. Then $S_2(k,d)\ge g_2(k,d)$.
\end{corollary}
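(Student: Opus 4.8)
The plan is to reduce each of the two odd distances to the even distance one unit larger, which is already covered by the preceding results. Both candidate values are odd, and in each case $d+1$ is precisely a distance for which the Griesmer bound has just been established: if $d = 2^r-1$ then $d+1 = 2^r$ is a power of two, handled by Corollary \ref{cor: griesmer 2l}, and if $d = 2^r-2^s-1$ then $d+1 = 2^r-2^s$ is a difference of two powers of two, handled by Theorem \ref{thm:griesmer_2r2s}; in both cases $r>s\ge 1$ guarantees the hypotheses of those results. So I would fix $k$, let $C$ be a systematic $(n,2^k,d)_2$-code of minimum distance exactly $d$ with $n = S_2(k,d)$ (by Lemma \ref{lem: reducing d} one may always take the distance to be exactly $d$), and aim to show $n \ge g_2(k,d)$.

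The central construction is the classical overall parity-check extension. Starting from $F(x) = (x_1,\ldots,x_k,f_{k+1}(x),\ldots,f_n(x))$ I would append the coordinate $f_{n+1}(x) = \sum_{i=1}^{n}(F(x))_i$, the sum of all existing coordinates. Since this new coordinate is again a function of the message $x$, the extended map still has the shape required by Definition \ref{defn: systematic}, so the extended code $\hat C$ is systematic of the same dimension $k$ and length $n+1$, with $2^k$ codewords (the systematic block is untouched, so distinct messages stay distinct). The key distance fact is that, because $d$ is odd, the extension raises the minimum distance by exactly one: any pair of codewords of $C$ at distance $d$ has odd-weight difference, so their extensions differ in the parity coordinate and land at distance $d+1$, while pairs already at distance $\ge d+1$ cannot drop; hence $\dd(\hat C) = d+1$.

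With $\hat C$ in hand I would invoke the already-established bound for the even distance $d+1$: since $\hat C$ is a systematic code of dimension $k$ and distance $d+1$, its length satisfies $n+1 = \len(\hat C) \ge S_2(k,d+1) \ge g_2(k,d+1)$, the final inequality being Corollary \ref{cor: griesmer 2l} or Theorem \ref{thm:griesmer_2r2s} according to the case. Finally I would apply Lemma \ref{lem: increase g(k,d)}: as $d$ is odd, the largest $l$ with $2^l \mid d$ is $l=0$, so $g_2(k,d+1) = g_2(k,d) + \min(k,1) = g_2(k,d)+1$. Combining, $n+1 \ge g_2(k,d)+1$, that is $n \ge g_2(k,d)$, which is the claim for every $k$.

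The main obstacle is the justification of the extension step, which is really two points to state with care: first, that appending an overall parity check preserves the systematic structure, which follows from the parity bit being a function of the systematic coordinates; and second, that odd $d$ forces the minimum distance of $\hat C$ to be exactly $d+1$ rather than merely at least $d+1$. The distance claim hinges on the parity-of-weight argument together with having chosen $C$ of minimum distance exactly $d$, so that a distance-$d$ pair genuinely exists to be promoted. Everything else is routine bookkeeping with the monotonicity of $g_2$ in $d$ and the two prior theorems.
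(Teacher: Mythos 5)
Your proposal is correct and takes essentially the same route as the paper: both arguments extend the code by an overall parity check to obtain a systematic code of distance $d+1\in\{2^r,\,2^r-2^s\}$, invoke Corollary \ref{cor: griesmer 2l} or Theorem \ref{thm:griesmer_2r2s} as appropriate, and then use Lemma \ref{lem: increase g(k,d)} with $l=0$ (since $d$ is odd) to conclude $n\ge g_2(k,d)$. The additional care you take in justifying that the extension preserves systematicity and raises the minimum distance to exactly $d+1$ is a detail the paper leaves implicit, not a different approach.
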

\begin{proof}
We prove it for the case $d=2^r-2^s-1$, the same argument can be applied to the other case  by applying Corollary \ref{cor: griesmer 2l} instead of Theorem \ref{thm:griesmer_2r2s}.\\
Suppose $S_2(k,d)< g_2(k,d)$, i..e. there exists an $(n,k,d)_2$-systematic code for which 
\begin{equation}\label{eq: proof d odd}
n<g_2(k,d).
\end{equation}
We can extend such code to an $(n+1,k,d+1)_2$-systematic code $C$ by adding a parity check component to each codeword. $C$ has distance $\dd(C)=d+1 = 2^r-2^s$, so we can apply Theorem \ref{thm:griesmer_2r2s} to it, finding
$$
n+1\ge g_2(k,d+1)
$$
Observe that $d$ is odd, so applying Lemma \ref{lem: increase g(k,d)} we obtain
$$n+1\ge g_2(k,d+1) =g_2(k,d)+1\quad \Rightarrow \quad n\ge g_2(k,d),$$
which contradicts \eqref{eq: proof d odd}.
\end{proof}
%
%
%
%
\section{Versions of the Griesmer bound holding for nonlinear codes}
\label{sec:weak_griesmer}
In this section we provide some versions of the Griesmer bound holding for any systematic code. 
\subsection{An improvement of the Singleton bound}
For systematic codes we can improve the Singleton bound as follows.
\begin{proposition}\label{prop: singleton improved}
For any $k$ and $d$ it holds
$$
S_2(k,d)\ge k+\left\lceil\frac{3}{2}d\right\rceil-2.
$$
\end{proposition}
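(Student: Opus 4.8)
The plan is to extract from $C$ a subcode of exactly four codewords and bound the length by a Plotkin-style counting of column weights. We may assume $k\ge 2$: for $k=1$ one has $S_2(1,d)=d$, so the claimed inequality $n\ge \lceil\tfrac32 d\rceil-1$ holds only when $d\le 2$ (where it coincides with the Singleton bound $n\ge k+d-1$) and genuinely fails for $d\ge 3$; thus the statement is to be read with $k\ge 2$. Let $C$ be a systematic $(n,2^k,d)_2$ code attaining $n=S_2(k,d)$, with encoding $F(x)=(x_1,\ldots,x_k,f_{k+1}(x),\ldots,f_n(x))$.

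First I would select the four messages $(0,0,0,\ldots,0)$, $(1,0,0,\ldots,0)$, $(0,1,0,\ldots,0)$, $(1,1,0,\ldots,0)$ supported on the first two systematic coordinates, and let $D\subseteq C$ be their images. These four codewords agree on the systematic coordinates $3,\ldots,k$ (all equal to $0$), so those $k-2$ coordinates contribute nothing to any pairwise distance; the only coordinates that can separate codewords of $D$ are coordinates $1,2$ and the $n-k$ non-systematic ones, i.e. $2+(n-k)$ coordinates in total. Since $D\subseteq C$, each of the $\binom{4}{2}=6$ pairwise distances is at least $d$.

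Next I would double count the sum $\Sigma=\sum_{\{c,c'\}}\dist(c,c')$ over the six pairs. For a single binary coordinate the contribution to $\Sigma$ is the product of the number of zeros and the number of ones occurring among the four codewords, which for four entries is at most $2\cdot 2=4$. Summing over the at most $2+(n-k)$ active coordinates gives $\Sigma\le 4\,(n-k+2)$, while the distance hypothesis gives $\Sigma\ge 6d$. Combining, $6d\le 4(n-k+2)$, hence $n-(k-2)\ge \tfrac32 d$; since $n-(k-2)$ is an integer this yields $n\ge k-2+\lceil\tfrac32 d\rceil$, which is the claim.

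The point requiring care is precisely the low-dimensional boundary: the four-codeword trick needs at least two systematic coordinates, so the argument is inherently a statement about $k\ge 2$, and the $k=1$ case must be excluded (or noted to coincide with Singleton only for $d\le 2$). One could instead reach the same conclusion through the paper's reduction machinery: shortening $C$ in a systematic coordinate (the first Proposition) produces an $(n-1,2^{k-1},d')$ code with $d'\ge d$, whence $S_2(k,d)\ge S_2(k-1,d)+1$ by monotonicity of $S_2$ in the distance; iterating down to the base case $S_2(2,d)\ge \lceil\tfrac32 d\rceil$, proved by the same $\Sigma\le 4n$ versus $\Sigma\ge 6d$ counting, gives $S_2(k,d)\ge (k-2)+\lceil\tfrac32 d\rceil$. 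The direct four-word counting, however, avoids the monotonicity step and delivers the bound in one shot.
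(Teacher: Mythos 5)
Your argument is correct and takes a genuinely different route from the paper. The paper follows the classical Griesmer/residual-code template: it normalizes so that $0\in C$ and a minimum-weight codeword $c$ has weight $1$ on its systematic part, punctures $C$ on the support of $c$ to get an $(n-d,2^{k-1},d')_2$ residual code, shows every nonzero residual word has weight at least $d/2$ via the two inequalities $w(v)+w(u)\ge d$ and $d-w(v)+w(u)\ge d$, and then bounds the residual length by $d/2+k-2$. You instead restrict to the four codewords encoding $(0,0,0,\ldots)$, $(1,0,0,\ldots)$, $(0,1,0,\ldots)$, $(1,1,0,\ldots)$ and double-count the six pairwise distances: each of the $n-k+2$ coordinates not forced to zero contributes at most $4$, so $6d\le 4(n-k+2)$, and integrality gives $n\ge k+\left\lceil\tfrac32 d\right\rceil-2$. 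Your computation is sound, and your approach buys two things: it avoids the paper's normalization ``w.l.o.g.\ the minimum-weight codeword has systematic weight $1$,'' which for nonlinear systematic codes is not an innocent reduction (one cannot freely reparametrize the message space without destroying systematicity), and it makes the rounding step transparent. What it gives up is the residual code itself, which is the object one would iterate to recover further terms of the Griesmer sum. Your observation about $k=1$ is also correct and applies equally to the paper: $S_2(1,d)=d$ while the claimed bound is $\left\lceil\tfrac32 d\right\rceil-1$, so the proposition as stated fails for $k=1$, $d\ge 3$; the paper's own proof silently needs $k\ge 2$ as well (it requires a nonzero codeword in the $2^{k-1}$-word residual code and a systematic part on which to place weight $1$), so flagging the hypothesis $k\ge 2$ is a genuine correction rather than a defect of your argument.
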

\begin{proof}
We will apply the same argument as for the proof of the Griesmer bound, which can be found in \cite{CGC-cd-book-huffmanPless03} (Section 2.4). \\
We consider a binary $(n=S_2(k,d),2^k,d)_2$-systematic code $C$ such that $0\in C$, and a codeword $c\in C$ whose weight is equal to the minimum distance $d$ of the code. 
We also assume $c$ has weight $1$ on its systematic part. 
The assumptions on $C$ and $c$ are w.l.o.g..\\ 
We construct a code $C'$ by puncturing $C$ in all the nonzero coordinates of $c$. We observe that $C'$ is itself a systematic code, due to the assumptions on $c$. In particular $C'$ is an $(n-d,2^{k-1},d')_2$-systematic code. We consider now a codeword $u\neq 0$ belonging to $C'$. There exists a vector $v\in\left(\mathbb{F}_2\right)^d$ such that the concatenation $(v|u)\in C$. This means that
$$
\left\{
\begin{array}{l}
w(v|u)=w(v)+w(u)\ge d\\
d(c,v|u)=d-w(v)+w(u)\ge d
\end{array}
\right.
$$
where $w(u)$ stands for the Hamming weight of $u$.
From the two inequality it follows that 
\begin{equation}\label{eq: w >d/2}
w(u)\ge \frac{d}{2} \;.
\end{equation}
We observe that \eqref{eq: w >d/2} is true for all non-zero codewords in $C'$, so we can choose $u$ to have weight $1$ in its systematic part. Therefore the length of $u$ has to be at least 
\begin{equation}\label{eq: length C'}
\len(u)\ge\frac{d}{2}+k-2\; .
\end{equation} 
Since $C'$ is an $(n-d,2^{k-1},d')_2$-systematic code, from \eqref{eq: length C'} we have
$$
n-d\ge\frac{d}{2}+k-2\qquad\Rightarrow\qquad n\ge k+\left\lceil\frac{3}{2}d\right\rceil -2
$$
\end{proof}

\subsection{Consequences of Theorem \ref{thm: griesmer qlr}}
We derive from Theorem \ref{thm: griesmer qlr} a weaker version of the Griesmer bound holding for any systematic code.
\begin{remark}\label{rem: exist l r}
Considering an integer $d$, there exist $1\leq r < q$ and $l\ge 0$ such that
\begin{equation}\label{eq: r l}
q^lr\leq d<q^l(r+1)\leq q^{l+1}
\end{equation}
In particular, $l$ has to be equal to $\left\lfloor \log_qd\right\rfloor$, and from inequality \eqref{eq: r l} we obtain
$d/q^l-1< r\leq d/q^l$,
namely $r=\left\lfloor d/q^l \right\rfloor.$
\end{remark}
\begin{corollary}[Bound A]\label{cor: griesmer qlr systematic}
Let $l=\left\lfloor\log_qd\right\rfloor$ and $r=\left\lfloor d/q^l \right\rfloor$. Then
$$
S_q(k,d)\ge d+\sum_{i=1}^{k-1}\left\lceil \frac{q^lr}{q^i} \right\rceil.
$$
\end{corollary}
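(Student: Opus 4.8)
The plan is to deduce Bound A from the already-proved Griesmer bound for the ``regular'' distance $d':=q^{l}r$ (Theorem~\ref{thm: griesmer qlr}), compensating for the gap $d-d'$ by a puncturing step. The starting point is an arithmetic rewriting: the $i=0$ term of $g_q(k,q^{l}r)$ equals $\lceil q^{l}r\rceil=q^{l}r$, so
\begin{equation}\nonumber
g_q(k,q^{l}r)=q^{l}r+\sum_{i=1}^{k-1}\left\lceil\frac{q^{l}r}{q^{i}}\right\rceil .
\end{equation}
Hence the right-hand side of Bound A is exactly $(d-q^{l}r)+g_q(k,q^{l}r)$, and by Remark~\ref{rem: exist l r} we have $q^{l}r\le d$, so $d-q^{l}r$ is a nonnegative integer. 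It therefore suffices to prove
\begin{equation}\nonumber
S_q(k,d)\ \ge\ (d-q^{l}r)+g_q(k,q^{l}r).
\end{equation}

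To this end I would take a systematic $(n,q^{k},d)_q$ code $C$ attaining $n=S_q(k,d)$. Since the Singleton bound gives $n-k\ge d-1\ge d-q^{l}r$ (using $r\ge 1$, hence $q^{l}r\ge 1$), there are at least $d-q^{l}r$ non-systematic coordinates, and I puncture $C$ on $d-q^{l}r$ of them. Puncturing a non-systematic coordinate leaves the identity block on the first $k$ positions intact, so the resulting code $\tilde C$ is still systematic of dimension $k$ (its $q^{k}$ codewords stay distinct), has length $n-(d-q^{l}r)$, and, as each puncturing lowers the distance by at most one, has minimum distance $\tilde d\ge d-(d-q^{l}r)=q^{l}r$.

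Next I would lower the distance of $\tilde C$ to exactly $q^{l}r$ without touching its length. This is precisely the operation supplied by Lemma~\ref{lem: reducing d} (used in the same length-preserving way as in the proof of Theorem~\ref{thm: minimum k q}): it yields a systematic code of length $n-(d-q^{l}r)$, dimension $k$, and distance $q^{l}r$. Theorem~\ref{thm: griesmer qlr} now applies to this distance (recall $1\le r<q$), so
\begin{equation}\nonumber
n-(d-q^{l}r)\ \ge\ g_q(k,q^{l}r),
\end{equation}
and rearranging gives $n\ge d+\sum_{i=1}^{k-1}\lceil q^{l}r/q^{i}\rceil$, i.e.\ Bound A.

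The argument is short, and the care needed is bookkeeping rather than conceptual: puncturing must be confined to non-systematic coordinates so that both systematicity and the dimension $k$ survive, and enough such coordinates must exist, which the Singleton bound guarantees. I would dispatch the two degenerate cases in one line each: if $l=0$ then $q^{l}r=d<q$, no puncturing is needed, and Bound A coincides with the Griesmer bound covered by Theorem~\ref{thm: griesmer 2q}; and the boundary case in which the punctured length equals $k$ forces $q^{l}r=1$ and is immediate. The only step that is not completely automatic is the length-preserving distance reduction of $\tilde C$, but this is exactly the content of Lemma~\ref{lem: reducing d} as already employed, so no new tool is required.
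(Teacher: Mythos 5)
Your proof is correct and follows essentially the same route as the paper's: set $s=d-q^{l}r$, puncture $s$ non-systematic coordinates, reduce the distance to exactly $q^{l}r$ via Lemma~\ref{lem: reducing d}, and apply Theorem~\ref{thm: griesmer qlr}. (The paper's text says it punctures ``systematic'' coordinates, but this is evidently a slip --- it has just noted that there are at least $s$ non-systematic ones --- and your version, with the Singleton-bound count of available coordinates, is the intended argument.)
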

\begin{proof}
We call $s$ the difference between $d$ and $q^lr$, namely $d=q^lr+s$. Note that $s\leq n-k$, and so there are at least $s$ non-systematic coordinates.
With this notation, let $C$ be an $(n,q^k,q^lr+s)_q$-systematic code. 
We build a new code $C_s$ by puncturing $C$ in $s$ systematic coordinates, 
so that $C_s$ has parameters $(n-s, q^k, d_s)_q$, for a certain $q^lr\leq d_s\leq q^lr+s$.\\
If $q^lr\neq d_s$, we can apply Lemma \ref{lem: reducing d}, in order to obtain another code $\bar{C}$, 
so that we have an $(n-s,q^k,q^lr)_q$-systematic code. 
Due to Remark \ref{rem: exist l r} it holds $1\leq r < q$, so we can apply Theorem \ref{thm: griesmer qlr} to $\bar{C}$. 
We find
$n-s\ge \sum_{i=0}^{k-1}\left\lceil q^lr/q^i \right\rceil$,
hence 
$n\ge \sum_{i=0}^{k-1}\left\lceil q^lr/q^i \right\rceil+s$.
We conclude by noticing that for $i=0$ we have $\left\lceil \frac{q^lr}{q^i}\right\rceil=q^lr$, and by adding $s$ we obtain exactly $d$. 
So
$ n\ge d+\sum_{i=1}^{k-1}\left\lceil q^lr/q^i \right\rceil$.
\end{proof}
\subsection{Consequences of Proposition \ref{prop: griesmer 1} }
Next we generalize Proposition \ref{prop: griesmer 1}. 
\begin{proposition}\label{prop: griesmer 2}
Let $q$, $k$ and $d$ be fixed, and let $l$ be the maximum integer such that $q^l$ divides $d$. 
Then it holds 
$$
N_q(q^k,d) \ge \sum_{i=0}^{h}\left\lceil \frac{d}{q^i}\right\rceil,
$$
where $h$ is the minimum between $k-1$ and $l$. 
\end{proposition}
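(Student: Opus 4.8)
The plan is to reduce the statement to the already-proved Proposition~\ref{prop: griesmer 1}, exploiting the fact that the truncated Griesmer sum $\sum_{i=0}^{h}\left\lceil d/q^i\right\rceil$ is itself a \emph{full} Griesmer bound $g_q(\,\cdot\,,d)$, but for a suitably reduced dimension. The point is that, up to index $h=\min(k-1,l)$, every ceiling is exact (since $q^i\mid d$ for $i\le l$), so the partial sum behaves exactly like the ``$d=q^{\dim-1}r$'' situation covered by Proposition~\ref{prop: griesmer 1} and its corollary.

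Concretely, I would set $k_0 = h+1 = \min(k,l+1)$, so that $k_0\le k$ and $k_0-1 = h\le l$. Writing $d=q^l r$ (with $r\ge 1$ by definition of $l$), the inequality $k_0-1\le l$ gives $q^{k_0-1}\mid d$, hence $d = q^{k_0-1}r''$ with $r''=d/q^{k_0-1}=q^{\,l-(k_0-1)}r\ge 1$ a positive integer. Since $q^k\ge q^{k_0}$, I would then invoke the corollary to Proposition~\ref{prop: griesmer 1} with dimension $k_0$, number of codewords $M=q^k$, and multiplier $r''$, obtaining
$$
N_q(q^k,d)=N_q\!\left(q^k,\,q^{k_0-1}r''\right)\ge g_q(k_0,d)=\sum_{i=0}^{k_0-1}\left\lceil\frac{d}{q^i}\right\rceil=\sum_{i=0}^{h}\left\lceil\frac{d}{q^i}\right\rceil,
$$
which is precisely the claimed bound.

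The only genuine subtlety, and the step I would check most carefully, is the divisibility bookkeeping: verifying $q^{k_0-1}\mid d$ and that the resulting $r''$ is a positive integer, since this is exactly what licenses casting $d$ into the shape $q^{\dim-1}\cdot r$ required by Proposition~\ref{prop: griesmer 1}. Everything else is automatic: the monotonicity of $N_q$ in the number of codewords (needed when $k_0<k$, i.e.\ $l<k-1$) is already packaged into the ``$M\ge q^k$'' form of the corollary, and the final identity uses only $k_0-1=h$. In particular no Plotkin-type estimate has to be redone here, as it is entirely absorbed into Proposition~\ref{prop: griesmer 1}.
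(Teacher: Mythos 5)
Your proof is correct and follows essentially the same route as the paper: both rest entirely on the Plotkin-based Proposition~\ref{prop: griesmer 1}, applied after truncating the Griesmer sum at $h=\min(k-1,l)$. Your packaging --- observing that $\sum_{i=0}^{h}\left\lceil d/q^i\right\rceil = g_q(h+1,d)$ with $q^{h}\mid d$, and invoking the corollary of Proposition~\ref{prop: griesmer 1} with dimension $k_0=h+1$ and $M=q^k\ge q^{k_0}$ --- is in fact a cleaner and more complete write-up than the paper's two-line sketch, which merely says to ``use the same argument'' and stop the sum at the term $d/q^l$.
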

\begin{proof}
First, notice that $d=q^lr,\quad q\nmid r$.
We can use the same argument as for the proof of Proposition \ref{prop: griesmer 1}. 
If $k-1 | l$, then we are in the same situation as above. 
Otherwise $h=l$, and $d$ is not divisible for higher powers of $q$, and we need to stop the sum to the term $\frac{d}{q^l}$.
\end{proof}
\begin{corollary}[Bound B]\label{cor: griesmer 2}
Let $q$, $M$ and $d$ be fixed, let $k$ the maximum integer such that $q^{k} \leq M$, and let $l$ be the maximum integer such that $q^l$ divides $d$.
Then it holds
$$
N_q(M,d) \ge \sum_{i=0}^{h}\left\lceil \frac{d}{q^i}\right\rceil,
$$
where $h$ is the minimum between $k-1$ and $l$. 
\end{corollary}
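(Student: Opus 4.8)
The plan is to reduce Bound B to Proposition \ref{prop: griesmer 2} by a monotonicity argument in the number of codewords. The key observation is that the right-hand side $\sum_{i=0}^{h}\left\lceil d/q^i\right\rceil$ is exactly the bound furnished by Proposition \ref{prop: griesmer 2} for a code with precisely $q^k$ codewords and the same distance $d$: since $l$ depends only on $d$ and $k$ is the fixed integer chosen as the largest with $q^k\le M$, the quantities $l$ and $h=\min(k-1,l)$ appearing in the corollary coincide with those in the proposition. Hence it suffices to establish the single inequality $N_q(M,d)\ge N_q(q^k,d)$ and then invoke the proposition verbatim.

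First I would dispose of the degenerate case $k=0$, which occurs precisely when $M<q$. Here $h=\min(k-1,l)=-1$, the sum on the right is empty and equals $0$, so the inequality holds trivially for any length.

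Next, assuming $k\ge 1$ (so that $q^k\ge q\ge 2$), I would take an optimal code $C$ realizing $N_q(M,d)$, that is an $(n,M,d)_q$-code with $n=N_q(M,d)$ and minimum distance exactly $d$. I would fix a pair of codewords $c_1,c_2\in C$ with $\dist(c_1,c_2)=d$, and then enlarge $\{c_1,c_2\}$ to a subset $C'\subseteq C$ of cardinality $q^k$; this is possible because $M\ge q^k\ge 2$. The subcode $C'$ has length $n$, exactly $q^k$ codewords, and minimum distance equal to $d$: every pair in $C'$ is a pair in $C$ and so is at distance at least $d$, while the retained pair $c_1,c_2$ witnesses distance exactly $d$. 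Consequently $N_q(q^k,d)\le\len(C')=n=N_q(M,d)$.

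Finally I would apply Proposition \ref{prop: griesmer 2} to $N_q(q^k,d)$, obtaining $N_q(q^k,d)\ge\sum_{i=0}^{h}\left\lceil d/q^i\right\rceil$ with $h=\min(k-1,l)$, and chain the two inequalities. The only point requiring care — the main obstacle, such as it is — is guaranteeing that the selected subcode has minimum distance exactly $d$ rather than something strictly larger, since Proposition \ref{prop: griesmer 2} is stated for distance $d$; forcing a minimum-distance pair into $C'$ settles this, and it is precisely the reason the case $k\ge 1$ (hence $q^k\ge 2$) must be separated from the trivial case $k=0$.
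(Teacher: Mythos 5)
Your proposal is correct and follows essentially the same route as the paper: pass from the $(n,M,d)_q$-code to an $(n,q^k,d)_q$-subcode using $q^k\leq M$, then apply Proposition \ref{prop: griesmer 2}. The only difference is that you explicitly retain a minimum-distance pair so the subcode has distance exactly $d$ --- a detail the paper glosses over but which is worth making precise, since $h$ depends on the exact value of $d$ through $l$.
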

\begin{proof}
If there exists an $(n,M,d)_q$-code, then there exists also an $(n,q^k,d)_q$ code, due to the condition $q^k\leq M$. 
Hence we can apply Proposition \ref{prop: griesmer 2}.
\end{proof}
%
\subsection{Relations between the Griesmer bound and the Plotkin bound}
We consider now the following bounds, which can be seen as weaker versions of the Griesmer bound or as an extension of the Plotkin bound.
\begin{proposition}\label{prop: weak griesmer}
For each choice of $q$, $k$ and $d$, it holds
$$
N_q(q^k,d) \ge \left\lceil \sum_{i=0}^{k-1} \frac{d}{q^i} \right\rceil.
$$
\end{proposition}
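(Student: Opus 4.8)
The plan is to recognise that the right-hand side is, up to taking a ceiling, exactly the quantity appearing in the Plotkin bound (Theorem \ref{thm: plotkin}) when the number of codewords is $M = q^k$. Writing $S = \sum_{i=0}^{k-1} d/q^i$, the first step is the purely algebraic identity $S = \frac{qd}{q-1}\bigl(1 - q^{-k}\bigr) = d\,\frac{1 - 1/q^k}{1 - 1/q}$, obtained by summing the geometric series. Since $k \ge 1$ we have $0 < 1 - q^{-k} < 1$, so this computation already records the key inequality $S < \frac{qd}{q-1}$, which is what will let me compare $n$ with the Plotkin threshold.

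The second step is to fix an optimal code, i.e.\ an $(n, q^k, d)_q$ code $C$ with $n = N_q(q^k, d)$, and to split on whether $C$ satisfies the hypothesis of the Plotkin bound. If $n < \frac{qd}{q-1}$, then Theorem \ref{thm: plotkin} applies with $M = q^k$ and yields $n \ge d\,\frac{1 - 1/q^k}{1 - 1/q} = S$; since $n$ is an integer, this immediately gives $n \ge \lceil S \rceil$, which is the claim.

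The remaining step is the complementary case $n \ge \frac{qd}{q-1}$, where the Plotkin bound is not directly available. Here I would simply invoke the inequality from the first step: $n \ge \frac{qd}{q-1} > S$, and an integer strictly larger than $S$ is necessarily at least $\lceil S \rceil$, so again $n \ge \lceil S \rceil$. Combining the two cases gives $N_q(q^k,d) = n \ge \lceil S \rceil = \left\lceil \sum_{i=0}^{k-1} d/q^i \right\rceil$.

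I do not expect a genuine obstacle: the entire content is the observation that the Plotkin expression coincides with $S$, so the only thing to watch is the boundary behaviour. The single subtle point is that the Plotkin bound carries the hypothesis $n < \frac{qd}{q-1}$, and one must dispose of the regime where it fails; the first-step inequality $S < \frac{qd}{q-1}$ is precisely what makes that regime trivial, so the two cases dovetail. Note, finally, that—unlike the ceiling-inside-the-sum form of the genuine Griesmer bound—this weaker form is derived solely from the Plotkin bound and therefore holds for arbitrary nonlinear codes, consistently with the statement being phrased in terms of $N_q(q^k,d)$.
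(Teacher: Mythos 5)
Your proof is correct and follows essentially the same route as the paper: both rest on the geometric-series identity $\sum_{i=0}^{k-1} d/q^i = d\,\frac{1-1/q^k}{1-1/q} < \frac{qd}{q-1}$ and an application of the Plotkin bound with $M=q^k$. The paper phrases it as a contradiction argument while you split into two cases on the Plotkin hypothesis, but this is only a presentational difference (and your explicit handling of the regime $n \ge \frac{qd}{q-1}$ is a slightly tidier way to dispose of that boundary issue).
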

\begin{proof}
We can use an argument similar to the proof of Proposition \ref{prop: griesmer 1}.
Suppose there is a code $C$ such that 
$n < \left\lceil \sum_{i=0}^{k-1} \frac{d}{q^i}\right\rceil$.
Observe that 
$
\sum_{i=0}^{k-1} d/q^i 
=
d
\left(
\frac{1-1/q^k}{1-1/q}
\right)
< d
\left(
\frac{1}{1-1/q}
\right)
$,
i.e.
$
n< d
\left(
\frac{1}{1-1/q}
\right)
$,
which allow us to apply the Plotkin bound and to find the contradiction
$
n \ge d\left(
\frac{1-1/q^k}{1-1/q}
\right)
$.
\end{proof}
From a direct computation, as we did in the proof, we find that Proposition \ref{prop: weak griesmer} can be also written as
\begin{proposition}\label{prop: explicit weak Griesmer}
For each choice of $q$, $k$ and $d$, it holds
\begin{equation}\label{eq: explicit weak griesmer}
N_q(q^k,d)\ge \left\lceil 
d\left(
\frac{1-\frac{1}{q^k}}{1-\frac{1}{q}}
\right)
\right\rceil.
\end{equation}
\end{proposition}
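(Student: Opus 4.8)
The plan is to derive Proposition \ref{prop: explicit weak Griesmer} as a direct algebraic reformulation of Proposition \ref{prop: weak griesmer}, exactly as the phrase ``from a direct computation, as we did in the proof'' signals. The entire content is the evaluation of the geometric sum already computed in the proof of Proposition \ref{prop: griesmer 1}. I would open by invoking Proposition \ref{prop: weak griesmer}, which gives
$$
N_q(q^k,d)\ge \left\lceil \sum_{i=0}^{k-1}\frac{d}{q^i}\right\rceil,
$$
so that it suffices to show the argument of the ceiling equals the claimed closed form.

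Next I would factor $d$ out of the sum and apply the finite geometric series identity, precisely the step used earlier in the excerpt:
$$
\sum_{i=0}^{k-1}\frac{d}{q^i}=d\sum_{i=0}^{k-1}\frac{1}{q^i}=d\left(\frac{1-\frac{1}{q^k}}{1-\frac{1}{q}}\right).
$$
Since the two expressions inside the ceiling functions are literally identical, the ceilings coincide, and substituting into the bound from Proposition \ref{prop: weak griesmer} yields \eqref{eq: explicit weak griesmer} immediately. There is no inequality to manufacture here and no case distinction, so the proof is a one-line substitution.

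The only point requiring the slightest care is making sure the geometric-series formula is stated for $q>1$ (so that $1-\tfrac{1}{q}\neq 0$), which always holds since $q$ is a prime power and hence at least $2$; this guarantees the denominator is nonzero and the simplification is valid. I do not anticipate any genuine obstacle: unlike Proposition \ref{prop: weak griesmer}, whose proof needed the Plotkin bound and the strict-inequality check $n<\tfrac{dq}{q-1}$ to trigger it, the present statement is purely a change of notation, recording in closed form the quantity whose ceiling already appears. Consequently I would keep the proof to a single displayed equation flanked by the references to Proposition \ref{prop: weak griesmer}, and conclude.
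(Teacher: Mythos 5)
Your proposal is correct and matches the paper exactly: the paper also obtains Proposition \ref{prop: explicit weak Griesmer} as a direct rewriting of Proposition \ref{prop: weak griesmer} via the geometric-series identity $\sum_{i=0}^{k-1}\frac{d}{q^i}=d\bigl(\frac{1-1/q^k}{1-1/q}\bigr)$, so the two ceilings coincide and nothing further is needed.
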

Observe that if the code has a number of words $M \ge q^k$, then by removing $M - q^k$ codewords 
we obtain an $(n,q^k,d)_q$-code and we can apply Proposition \ref{prop: explicit weak Griesmer}. 
We obtain the following Corollary.
\begin{corollary}[Bound C]
For each choice of $q$, $k$ and $d$, it holds
\begin{equation}\label{eq: explicit weak griesmer, K words}
N_q(N,d)\ge \left\lceil 
d\left(
\frac{1-\frac{1}{q^k}}{1-\frac{1}{q}}
\right)
\right\rceil.
\end{equation}
where $k$ is the larger integer such that $M \ge q^k$.
\end{corollary}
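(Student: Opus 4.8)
The plan is to reduce to Proposition \ref{prop: explicit weak Griesmer} by a subcode argument, exactly as indicated in the observation immediately preceding the statement. First I would take a code $C$ realizing the minimum length, that is, an $(n,M,d)_q$-code with $n = N_q(M,d)$. Since $k$ is chosen as the largest integer with $M \ge q^k$, the code $C$ contains at least $q^k$ codewords, so I can discard $M - q^k$ of them to form a subcode $C' \subseteq C$ having exactly $q^k$ codewords.

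The key observation is that deleting codewords cannot decrease the minimum distance: every pair of codewords of $C'$ is also a pair of codewords of $C$, whose Hamming distance is at least $d$, hence $\dd(C') \ge d$. Moreover, since we only removed codewords and left every coordinate untouched, $C'$ has the same length $n$ as $C$. Therefore $C'$ is an $(n, q^k, d')_q$-code with $d' \ge d$, and in particular $n \ge N_q(q^k, d')$ because $N_q(q^k, d')$ is by definition the minimum length of such a code.

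I would then close the argument by invoking Proposition \ref{prop: explicit weak Griesmer} together with the monotonicity of the bound in the distance. Applying that proposition to the parameters $(q^k, d')$ gives $N_q(q^k, d') \ge \lceil d'\,(1 - 1/q^k)/(1 - 1/q) \rceil$, and since $d' \ge d$ while the right-hand side is non-decreasing in the distance, this quantity is at least $\lceil d\,(1 - 1/q^k)/(1 - 1/q) \rceil$. Chaining the inequalities yields $N_q(M,d) = n \ge \lceil d\,(1 - 1/q^k)/(1 - 1/q) \rceil$, which is exactly \eqref{eq: explicit weak griesmer, K words}.

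The argument has essentially no obstacle: the only two points requiring a word of care are the non-decrease of the minimum distance under deletion of codewords and the monotonicity of the stated lower bound in $d$, both of which are immediate. The role of the hypothesis that $k$ is the largest integer with $M \ge q^k$ is simply to guarantee that a subcode of size $q^k$ exists while keeping the resulting bound as strong as the method allows.
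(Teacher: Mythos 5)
Your proposal is correct and follows essentially the same route as the paper: the paper's justification is precisely the remark preceding the corollary, namely that one removes $M-q^k$ codewords to obtain an $(n,q^k,d)_q$-code and then applies Proposition \ref{prop: explicit weak Griesmer}. Your version is slightly more careful in noting that the subcode's minimum distance may increase and that the bound is monotone in $d$, but this is the same argument.
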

%
%
\section{Counterexamples to the Griesmer bound}
\label{sec:counterexample}
In this section we show explicitly binary nonlinear codes for which the Griesmer bound does not hold. It is indeed already known that there exist pairs $(k,d)$ for which $N_2(2^k,d)<g_2(k,d)$, however it was not clear whether the same was true for systematic codes or not. We start in the next section by expliciting a nonlinear non-systematic code whose length contradicts the Griesmer bound. Then we make use of this code to explicit a systematic code contradicting itself the Griesmer bound, proving that in general $g_2(k,d)$ is not a bound for systematic codes.
\subsection{The nonlinear case}
In \cite{CGC-art-cod-Levenshtein64application}, Levenshtein has shown that if Hadamard matrices of certain orders exist, 
then the binary codes obtained from them meet the Plotkin Bound. 
Levenshtein's method to construct such codes can be found also in the proof of Theorem 8, of \cite[Ch.~3,\S 2]{CGC-cd-book-macwilliamsTOT}.\\
\\
\begin{example}\label{ex:nonlinear_counterex}
 The next code is a $(19,16,10)_2$-nonlinear and non-systematic code, obtained using Levensthein's method, as explained in \cite[Ch.~3,\S 2]{CGC-cd-book-macwilliamsTOT}. 
 All its codewords have weight 10 (except the zero codeword) and each pair of codewords has distance $d=10$. 
 The code is composed by the zero codeword and by 15 shifts of the codeword $c=(1,1,0,0,1,1,1,1,0,1,0,1,0,0,0,0,1,1,0)$. We show here explicitly the code:
 {\footnotesize 
 \begin{align*}
  C = \{
& (0,0,0,0,0,0,0,0,0,0,0,0,0,0,0,0,0,0,0) , \\
& (1,1,0,0,1,1,1,1,0,1,0,1,0,0,0,0,1,1,0) , \\
& (1,0,0,1,1,1,1,0,1,0,1,0,0,0,0,1,1,0,1) , \\
& (0,0,1,1,1,1,0,1,0,1,0,0,0,0,1,1,0,1,1) , \\
& (0,1,1,1,1,0,1,0,1,0,0,0,0,1,1,0,1,1,0) , \\
& (1,1,1,1,0,1,0,1,0,0,0,0,1,1,0,1,1,0,0) , \\
& (1,1,1,0,1,0,1,0,0,0,0,1,1,0,1,1,0,0,1) , \\
& (1,1,0,1,0,1,0,0,0,0,1,1,0,1,1,0,0,1,1) , \\
& (1,0,1,0,1,0,0,0,0,1,1,0,1,1,0,0,1,1,1) , \\
& (0,1,0,1,0,0,0,0,1,1,0,1,1,0,0,1,1,1,1) , \\
& (1,0,1,0,0,0,0,1,1,0,1,1,0,0,1,1,1,1,0) , \\
& (0,1,0,0,0,0,1,1,0,1,1,0,0,1,1,1,1,0,1) , \\
& (1,0,0,0,0,1,1,0,1,1,0,0,1,1,1,1,0,1,0) , \\
& (0,0,0,0,1,1,0,1,1,0,0,1,1,1,1,0,1,0,1) , \\
& (0,0,0,1,1,0,1,1,0,0,1,1,1,1,0,1,0,1,0) , \\
& (0,0,1,1,0,1,1,0,0,1,1,1,1,0,1,0,1,0,0) 
\}
 \end{align*} 
 }
This code has length $n = 19 < g_2(4,10) = 20$, i.e. the code $C$ proves that $N_2(16,10)<g_2(4,10)$.
\end{example}
\subsection{The systematic case}
In this section we provide an example of an $(n,q^k,d)_2$-systematic code for which $n < g_2(k,d)$, proving that in general the Griesmer bound does not hold for systematic codes. 
\begin{example}
 To construct an $(n,k,d)_2$-systematic code for which $n < g_q(k,d)$, we search for a $[15,4,8]_2$-linear code $C_l$. We remark that $C_l$ would attain the Griesmer bound with equality, and being $d=8$, we can apply Corollary \ref{cor: griesmer 2l} to be sure that no binary nonlinear systematic codes exists with the same dimension and distance but smaller length.\\ To build $C_l$ we consider the cyclic code of length $15$ associated to the complete defining set $S=\{0,1,2,3,4,5,6,8,9,10,12\}$, which is a code with dimension $4$ and distance $8$. We can therefore find a systematic linear code equivalent to $C_l$. A possible choice is the code generated by 
 \begin{equation}\nonumber
 \setcounter{MaxMatrixCols}{20}
\begin{bmatrix}
1 & 0 & 0 & 0 & 1 & 1 & 1 & 0 & 1 & 0 & 0 & 1 & 0 & 1 & 1  \\
0 & 1 & 0 & 0 & 1 & 1 & 0 & 1 & 1 & 0 & 1 & 0 & 1 & 0 & 1  \\
0 & 0 & 1 & 0 & 1 & 0 & 1 & 1 & 1 & 1 & 0 & 0 & 1 & 1 & 0  \\
0 & 0 & 0 & 1& 0  & 1 & 1 & 1 & 1 & 1 & 1 & 1 & 0 & 0 & 0  \\
\end{bmatrix}
\end{equation}
To obtain a nonlinear systematic code $\bar{C}$ not verifying the Griesmer bound we make use of both this code and the code $C$ in Example \ref{ex:nonlinear_counterex}. This new code is obtained by concatenating each codeword in $C_l$ with a different codeword in $C$. In this way $\bar{C}$ is an $(34,4,18)_2$-systematic code. 
In the following we explicit all codewords in $\bar{C}$.
 {\footnotesize 
 \begin{align*}
  \bar{C} = \{
& (0,0,0,0,0,0,0,0,0,0,0,0,0,0,0,\;\; 0,0,0,0,0,0,0,0,0,0,0,0,0,0,0,0,0,0,0) , \\
& (1,0,0,0,1,1,1,0,1,0,0,1,0,1,1,\;\; 1,1,0,0,1,1,1,1,0,1,0,1,0,0,0,0,1,1,0) , \\
& (1,1,0,0,0,0,1,1,0,0,1,1,1,1,0,\;\; 1,0,0,1,1,1,1,0,1,0,1,0,0,0,0,1,1,0,1) , \\
& (0,1,0,0,1,1,0,1,1,0,1,0,1,0,1,\;\; 0,0,1,1,1,1,0,1,0,1,0,0,0,0,1,1,0,1,1) , \\
& (0,1,1,0,0,1,1,0,0,1,1,0,0,1,1,\;\; 0,1,1,1,1,0,1,0,1,0,0,0,0,1,1,0,1,1,0) , \\
& (1,1,1,0,1,0,0,0,1,1,1,1,0,0,0,\;\; 1,1,1,1,0,1,0,1,0,0,0,0,1,1,0,1,1,0,0) , \\
& (1,0,1,0,0,1,0,1,0,1,0,1,1,0,1,\;\; 1,1,1,0,1,0,1,0,0,0,0,1,1,0,1,1,0,0,1) , \\
& (0,0,1,0,1,0,1,1,1,1,0,0,1,1,0,\;\; 1,1,0,1,0,1,0,0,0,0,1,1,0,1,1,0,0,1,1) , \\
& (0,0,1,1,1,1,0,0,0,0,1,1,1,1,0,\;\; 1,0,1,0,1,0,0,0,0,1,1,0,1,1,0,0,1,1,1) , \\
& (1,0,1,1,0,0,1,0,1,0,1,0,1,0,1,\;\; 0,1,0,1,0,0,0,0,1,1,0,1,1,0,0,1,1,1,1) , \\
& (1,1,1,1,1,1,1,1,0,0,0,0,0,0,0,\;\; 1,0,1,0,0,0,0,1,1,0,1,1,0,0,1,1,1,1,0) , \\
& (0,1,1,1,0,0,0,1,1,0,0,1,0,1,1,\;\; 0,1,0,0,0,0,1,1,0,1,1,0,0,1,1,1,1,0,1) , \\
& (0,1,0,1,1,0,1,0,0,1,0,1,1,0,1,\;\; 1,0,0,0,0,1,1,0,1,1,0,0,1,1,1,1,0,1,0) , \\
& (1,1,0,1,0,1,0,0,1,1,0,0,1,1,0,\;\; 0,0,0,0,1,1,0,1,1,0,0,1,1,1,1,0,1,0,1) , \\
& (1,0,0,1,1,0,0,1,0,1,1,0,0,1,1,\;\; 0,0,0,1,1,0,1,1,0,0,1,1,1,1,0,1,0,1,0) , \\
& (0,0,0,1,0,1,1,1,1,1,1,1,0,0,0,\;\; 0,0,1,1,0,1,1,0,0,1,1,1,1,0,1,0,1,0,0) 
\}
 \end{align*} 
 }
Notice that $g_2(4,18) = 35$, therefore $S_2(4,18)<g_2(4,18)$, proving that the Griesmer bound is in general not true for systematic codes.\\
\end{example}
We conjecture the following:
\begin{conjecture}
For any $r\ge 3$ there is a systematic code with distance $2^r+2$ not satisfying the Griesmer bound.
\end{conjecture}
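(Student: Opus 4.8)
The plan is to generalise the construction behind the $(34,16,18)_2$ counterexample of the previous section, namely to \emph{concatenate a Griesmer-optimal linear code with a Plotkin-optimal nonlinear (Levenshtein) code}, pairing the codewords so that the result is systematic. For a fixed $r$ I would start from the simplex code $[2^r-1,r,2^{r-1}]_2$, which meets the Griesmer bound with length $g_2(r,2^{r-1})=2^r-1$, and from a binary nonlinear code $L_r$ with $2^r$ codewords and minimum distance $2^{r-1}+2$ whose length is as small as the Plotkin bound permits. Indexing both families of codewords by the messages $x\in(\F)^r$ and concatenating the $x$-th simplex word with the $x$-th word of $L_r$ yields a systematic code $C_r$ of dimension $r$: the identity block of the (systematic) simplex generator guarantees the systematic form. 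Since the distance between two distinct concatenated words is the sum of the distances of their two halves, each at least the corresponding minimum, $C_r$ has minimum distance at least $2^{r-1}+(2^{r-1}+2)=2^r+2$; choosing $L_r$ equidistant, as the Hadamard constructions of Example~\ref{ex:nonlinear_counterex} do, makes this exact.

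Next I would pin down the two lengths and compare with the Griesmer value. Starting from $g_2(r,2^r)=2^{r+1}-2$ (the computation in Lemma~\ref{lem: g < 2^r}) and applying Lemma~\ref{lem: increase g(k,d)} twice---once at $d=2^r$, where the increment is $\min(r,r+1)=r$, and once at the odd $d=2^r+1$, where it is $\min(r,1)=1$---gives $g_2(r,2^r+2)=2^{r+1}+r-1$. The Plotkin bound (Theorem~\ref{thm: plotkin}) forces $\len(L_r)\ge 2(2^{r-1}+2)(1-2^{-r})=2^r+3-2^{2-r}$, i.e.\ $\len(L_r)\ge 2^r+3$ for $r\ge 3$, so the best one can hope for is $\len(L_r)=2^r+3$. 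With this optimal value $C_r$ has length $(2^r-1)+(2^r+3)=2^{r+1}+2$, and
\[
2^{r+1}+2<2^{r+1}+r-1=g_2(r,2^r+2)\quad\Longleftrightarrow\quad r\ge 4 .
\]
Thus for every $r\ge4$ the code $C_r$, if it exists, is an explicit systematic $(2^{r+1}+2,2^r,2^r+2)_2$ counterexample at dimension $r$, reproducing the case $r=4$ already exhibited; note that $r<1+\log_2(2^r+2)$, so this dimension lies in the finite regime isolated by Theorem~\ref{thm: minimum k q}.

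The crux---and the reason this remains a conjecture---is the existence of the nonlinear code $L_r$ meeting the Plotkin bound. By Levenshtein's method this code is obtained from a Hadamard matrix of order $2d=2^r+4$ (order $20$ in the case $r=4$), so the whole argument is contingent on the existence of such Hadamard matrices; this is known for small $r$ but open in general, and I expect it to be the principal obstacle. Note also that the budget is tight: Plotkin already spends $2^r+3$ of the $2^r+r$ units that Lemma~\ref{lem: increase g(k,d)} makes available between $g_2(r,2^{r-1})$ and $g_2(r,2^r+2)$, so nothing can be gained from the linear part and the equidistant optimal nonlinear code must be used verbatim.

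Finally, the boundary value $r=3$ needs separate handling, because at dimension $3$ the Griesmer value $g_2(3,10)=18$ already coincides with the Plotkin lower bound, so \emph{no} code---systematic or not---can beat Griesmer there, and a concatenation argument at dimension $3$ is impossible. Instead I would realise the distance-$10$ counterexample at dimension $4$, where $g_2(4,10)=20$ strictly exceeds the Plotkin value $19$. One gains nothing by concatenation here either (every way of splitting the distance between a linear and a nonlinear part ties at length $20$), so the remaining task is to show that the nonlinear $(19,16,10)_2$ code of Example~\ref{ex:nonlinear_counterex} admits an information set of size $4$, i.e.\ four coordinates on which the $16$ codewords take all $16$ values; a suitable coordinate permutation then puts it in systematic form and yields a systematic $(19,16,10)_2$ code with $19<20=g_2(4,10)$. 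Checking this information-set property directly on the explicit codewords would settle $r=3$ and complete the pattern.
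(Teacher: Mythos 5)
First, a point of framing: the statement you were asked to prove is stated in the paper as a \emph{conjecture} --- the authors give no proof, only the single instance $r=4$ realized by the $(34,2^4,18)_2$ code of the last section. So the relevant question is whether your argument closes the conjecture, and it does not, as you yourself acknowledge. That said, your strategy is exactly the natural generalization of the paper's example (the simplex $[2^r-1,r,2^{r-1}]_2$ code, which meets Griesmer with equality, concatenated codeword-by-codeword with an equidistant Plotkin-optimal $(2^r+3,2^r,2^{r-1}+2)_2$ code), and your arithmetic checks out: two applications of Lemma~\ref{lem: increase g(k,d)} give $g_2(r,2^r+2)=2^{r+1}+r-1$, the Plotkin bound forces length at least $2^r+3$ on the nonlinear half, and $(2^r-1)+(2^r+3)=2^{r+1}+2<2^{r+1}+r-1$ precisely when $r\ge 4$. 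This is a genuinely useful consistency check and explains why the authors single out distances of the form $2^r+2$.

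Two genuine gaps remain, and they are the entire content of the conjecture. (1) For $r\ge 4$ your construction is conditional on the existence of an equidistant $(2^r+3,2^r,2^{r-1}+2)_2$ code meeting the Plotkin bound, i.e.\ on a Hadamard matrix of order $2^r+4$; this is an instance of the Hadamard conjecture and is open in general. Moreover your length budget is tight --- $2^{r+1}+2$ against $2^{r+1}+r-1$ leaves a margin of only $r-3$ --- so the Hadamard dependence cannot be removed by substituting any known suboptimal code unless its length is within $r-3$ of the Plotkin value. (2) The case $r=3$ is not settled. Your observation that $g_2(3,10)=18$ coincides with the Plotkin bound at dimension $3$ correctly rules out a dimension-$3$ counterexample, but your fallback at dimension $4$ --- that the $(19,16,10)_2$ code of Example~\ref{ex:nonlinear_counterex} admits an information set of size $4$ --- is asserted rather than verified, and it is not obvious for an equidistant code of this shape: one needs four coordinates on which the $16$ codewords exhaust $(\mathbb{F}_2)^4$, and each such coordinate must in particular have column weight exactly $8$, which already constrains the choice since the column weights of this cyclic-shift code vary. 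Until that finite check (or some other systematic realization of a distance-$10$ counterexample) is carried out, even the $r=3$ instance is open. In short: right construction, correct computations, but the statement remains a conjecture after your argument, for exactly the reasons you identify.
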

The example given is a special case with $r=4$.

%
%
\section{Conclusions}\label{sec:conclusion}
In this work we have addressed the problem of characterize the Griesmer bound for systematic nonlinear codes, mainly in the binary case. The Griesmer bound is one of the few bounds which can only be applied to linear codes, however classical counterexamples arose from the Levensthein's method for building optimal nonlinear codes, which however does not provide specific counterexamples for the systematic case. It was therefore non fully understood whether the Griesmer bound would hold for systematic nonlinear codes, or whether there exist families of parameters $(k,d)$ for which the bound could be applied to the nonlinear case. Moreover, weaker versions of the Griesmer bound might hold for nonlinear codes.
\\
As regards nonlinear codes satisfying the Griesmer bound, the main results of our work are Theorem \ref{thm:griesmer_2r2s} and Corollary \ref{cor: odd distances}, in which we prove that whenever a binary systematic nonlinear code has a distance $d$ such that
\begin{enumerate}
\item $d=2^r$,
\item $d=2^r-1$,
\item $d=2^r-2^s$, or
\item $d=2^r-2^s-1$,
\end{enumerate}
then the Griesmer bound can be applied.\\
We also provide versions of the Griesmer bound holding for nonlinear codes: Bound A, Bound B and Bound C.\\
Finally, we conclude by showing explicit examples of systematic nonlinear codes for which the Griesmer bound does not hold.\\
All the results can be easily extended to codes over any alphabet.

\section{Aknowledgements}\label{sec:conclusion}
The first three authors would like to thank their (former) supervisor: the last author.

%
%

\bibliographystyle{amsalpha}
\bibliography{RefsCGC}

\providecommand{\bysame}{\leavevmode\hbox to3em{\hrulefill}\thinspace}
\providecommand{\MR}{\relax\ifhmode\unskip\space\fi MR }
\providecommand{\MRhref}[2]{%
  \href{http://www.ams.org/mathscinet-getitem?mr=#1}{#2}
}
\providecommand{\href}[2]{#2}
\begin{thebibliography}{{Van}80}

\bibitem[Bas65]{CGC-cd-art-bass65}
L.A. Bassalygo, \emph{{New upper bounds for error correcting codes}}, Problemy
  Peredachi Informatsii \textbf{1} (1965), no.~4, 41--44.

\bibitem[BGS14]{CGC-cd-art-BellGuerrSal2014}
E.~Bellini, E.~Guerrini, and M.~Sala, \emph{{Some bounds on the size of
  codes}}, IEEE Trans. Inform. Theory \textbf{60} (2014), no.~3, 1475--1480.

\bibitem[Del73]{CGC-cd-art-Dels73}
P.~Delsarte, \emph{{An algebraic approach to the association schemes of coding
  theory}}, Philips Res. Rep. Suppl. (1973), no.~10, vi+97.

\bibitem[Gol49]{CGC-gola49}
M.~Golay, \emph{{Notes on Digital Coding}}, Proc. IRE \textbf{37} (1949), 657.

\bibitem[Gri60]{CGC-cd-art-griesm60}
J.H. Griesmer, \emph{{A bound for error-correcting codes}}, IBM Journal of
  Research and Development \textbf{4} (1960), no.~5, 532--542.

\bibitem[Ham93]{CGC-cod-art-Hamada93}
Noboru Hamada, \emph{{A characterization of some [< i> n, k, d; q]-codes
  meeting the Griesmer bound using a minihyper in a finite projective
  geometry}}, Discrete Mathematics \textbf{116} (1993), no.~1, 229--268.

\bibitem[Hel81]{CGC-cod-art-helleseth1981characterization}
Tor Helleseth, \emph{{A characterization of codes meeting the Griesmer bound}},
  Information and Control \textbf{50} (1981), no.~2, 128--159.

\bibitem[Hel92]{CGC-cod-art-hellesth1992projective}
Tor Hellesth, \emph{{Projective codes meeting the Griesmer bound}}, Discrete
  mathematics \textbf{106} (1992), 265--271.

\bibitem[HH93]{CGC-cod-art-hamada1993characterization}
NOBORU Hamada and T~Helleseth, \emph{{A characterization of some ternary codes
  meeting the Griesmer bound}}, Finite Fields: Theory, Applications, and
  Algorithms \textbf{168} (1993), 139--150.

\bibitem[HP03]{CGC-cd-book-huffmanPless03}
W.~C. Huffman and V.~Pless, \emph{{Fundamentals of error-correcting codes}},
  Cambridge University Press, 2003.

\bibitem[Joh62]{CGC-cd-art-john62}
S.~Johnson, \emph{{A new upper bound for error-correcting codes}}, Information
  Theory, IRE Transactions on \textbf{8} (1962), no.~3, 203--207.

\bibitem[Joh71]{CGC-cd-art-john71}
\bysame, \emph{{On upper bounds for unrestricted binary-error-correcting
  codes}}, Information Theory, IEEE Transactions on \textbf{17} (1971), no.~4,
  466--478.

\bibitem[Kle04]{CGC-cod-art-Klein04}
Andreas Klein, \emph{{On codes meeting the Griesmer bound}}, Discrete
  Mathematics \textbf{274} (2004), no.~1--3, 289--297.

\bibitem[Lev64]{CGC-art-cod-Levenshtein64application}
V.~I. Levenshtein, \emph{{The application of Hadamard matrices to a problem in
  coding}}, Problems of Cybernetics (1964), no.~5, 166--184.

\bibitem[Lev98]{CGC-cd-art-lev98}
V.I. Levenshtein, \emph{{Universal bounds for codes and designs}}, {Handbook of
  Coding Theory} (V.~S. Pless and W.~C. Huffman, eds.), vol.~1, Elsevier, 1998,
  pp.~499--648.

\bibitem[LL98]{CGC-cod-art-litlai98}
T.~Laihonen and S.~Litsyn, \emph{{On upper bounds for minimum distance and
  covering radius of non-binary codes}}, Des. Codes Cryptogr. \textbf{14}
  (1998), no.~1, 71--80.

\bibitem[Mar96]{CGC-cod-art-maruta1996non}
Tatsuya Maruta, \emph{{On the non-existence of linear codes attaining the
  Griesmer bound}}, Geometriae Dedicata \textbf{60} (1996), no.~1, 1--7.

\bibitem[Mar97]{CGC-cod-art-Maruta1997}
\bysame, \emph{{On the Achievement of the Griesmer Bound}}, Designs, Codes and
  Cryptography \textbf{12} (1997), no.~1, 83--87.

\bibitem[MS77]{CGC-cd-book-macwilliamsTOT}
F.~J. MacWilliams and N.~J.~A. Sloane, \emph{{The theory of error-correcting
  codes. {I} and {II}}}, North-Holland Publishing Co., Amsterdam, 1977.

\bibitem[PBH98]{CGC-cd-book-pless98}
V.~Pless, R.A. Brualdi, and W.C. Huffman, \emph{{Handbook of coding theory}},
  Elsevier Science Inc., 1998.

\bibitem[Plo60]{CGC-cd-art-plotk60}
M.~Plotkin, \emph{{Binary codes with specified minimum distance}}, Information
  Theory, IRE Transactions on \textbf{6} (1960), no.~4, 445--450.

\bibitem[SS65]{CGC-cod-art-solstiffl65}
G.~Solomon and J.~J. Stiffler, \emph{{Algebraically punctured cyclic codes}},
  Information and Control \textbf{8} (1965), no.~2, 170--179.

\bibitem[Tam84]{CGC-cod-art-tamari1984linear}
Fumikazu Tamari, \emph{{On linear codes which attain the Solomon-Stiffler
  bound}}, Discrete Mathematics \textbf{49} (1984), no.~2, 179--191.

\bibitem[Tam93]{CGC-cod-art-Tamari93}
\bysame, \emph{{A construction of some [n, k, d; q]-codes meeting the Griesmer
  bound}}, Discrete Mathematics \textbf{116} (1993), no.~1--3, 269--287.

\bibitem[{Van}80]{CGC-cod-art-van1980uniqueness}
Henk {Van Tilborg}, \emph{{On the uniqueness resp. nonexistence of certain
  codes meeting the Griesmer bound}}, Information and control \textbf{44}
  (1980), no.~1, 16--35.

\bibitem[War98]{CGC-cod-art-ward1998divisibility}
Harold~N Ward, \emph{{Divisibility of codes meeting the Griesmer bound}},
  Journal of Combinatorial Theory, Series A \textbf{83} (1998), no.~1, 79--93.

\bibitem[ZL84]{CGC-cod-art-zinlyts1984shortening}
V.~A. Zinov'ev and S.N. Litsyn, \emph{{On Shortening of Codes}}, Problemy
  Peredachi Informatsii \textbf{20} (1984), no.~1, 3--11.

\end{thebibliography}

\end{document}